\newtheorem{theorem}{{\bf Theorem}}
\newtheorem{lemma}{{\bf Lemma}}
\newtheorem{proposition}{\bf Proposition}
\newcommand{\brac}[1]{\left({#1}\right)}
\newcommand{\sbrac}[1]{\left[{#1}\right]}
\newcommand{\cbrac}[1]{\left\{{#1}\right\}}
\newcommand{\mc}[1]{\mathcal{#1}}
\newcommand{\mb}[1]{\mathbb{#1}}
\newcommand{\abs}[1]{\lvert{#1}\rvert}
\newcommand{\norm}[1]{\|{#1}\|}
\newcommand{\indic}{\mathbbm{1}}
\newsavebox{\mybox}\newsavebox{\mysim}
\newcommand{\distras}[1]{%
  \savebox{\mybox}{\hbox{\kern3pt$\scriptstyle#1$\kern3pt}}%
  \savebox{\mysim}{\hbox{$\sim$}}%
  \mathbin{\overset{#1}{\kern\z@\resizebox{\wd\mybox}{\ht\mysim}{$\sim$}}}%
  \newcommand{\fact}[1]{{#1}\!}
}
\begin{document}

\title{Optimal Content Replication and Request Matching in Large Caching Systems}

%

%

\author{\IEEEauthorblockN{Arpan Mukhopadhyay\IEEEauthorrefmark{1}
, Nidhi Hegde\IEEEauthorrefmark{2} and Marc Lelarge\IEEEauthorrefmark{3}}
\IEEEauthorblockA{\IEEEauthorrefmark{1}IC/LCA-2, EPFL, Switzerland. Email: arpan.mukhopadhyay@epfl.ch\\}
\IEEEauthorblockA{\IEEEauthorrefmark{2}Nokia Bell Labs, France. Email: nidhi.hegde@nokia-bell-labs.com\\}
\IEEEauthorblockA{\IEEEauthorrefmark{3}INRIA-ENS, Paris, France. Email: marc.lelarge@ens.fr}}

\maketitle

\begin{abstract}
We consider models of content delivery networks in which the servers are constrained by 
two main resources: memory and bandwidth. In such systems, 
the throughput crucially depends on how contents are replicated 
across servers and  how the requests of specific contents 
are matched to servers storing those contents.  
In this paper, we first formulate the  problem of 
computing the optimal replication policy which if combined with the optimal 
matching policy maximizes the throughput of the caching system in the stationary regime.  
It is shown that computing 
the optimal replication policy for a given system is an NP-hard problem.  
A greedy replication scheme is proposed and it is shown that the scheme 
provides a constant factor approximation guarantee.  
We then propose a simple randomized matching scheme 
which avoids the problem of interruption in service of the ongoing requests  due to
re-assignment or repacking of the existing requests 
in the optimal matching policy. 
The dynamics of the caching system is
analyzed under the combination of proposed replication and matching schemes.   
We study a limiting regime, where the number of servers and 
the arrival rates of the contents are scaled proportionally, 
and show that the proposed policies achieve asymptotic optimality.  
Extensive simulation results are presented to evaluate the performance of different policies
and study the behavior of the caching system under different service time distributions
of the requests.
\end{abstract}

%
%
%

%
%

%
%



\section{Introduction}

Recent years have seen an explosive growth
in Internet traffic, stemming mainly from the transfer 
of  multi-media contents, e.g.,  streaming videos, movies etc.  
It is expected that video streaming services and downloads will account for more than
81\% of  all Internet's traffic by 2021~\cite{CiscoWP17}.
Such growth in multi-media traffic  has led to the emergence of
content delivery networks (CDNs) and peer-to-peer systems, 
which support the demand for contents by replicating
popular contents at the network periphery (e.g., boxes or servers).
Popular video-streaming services such as Netflix, Youtube
often use CDN's to serve the requests of their most popular contents.     

Large CDNs usually consist of a central server, storing an entire catalogue of contents,
and a large number of edge servers, each storing a small fraction of these contents
in their caches and serving requests of the stored contents~\cite{Dilley2002}.
In such systems, it is assumed that access to the central server  is expensive. 
Therefore, a large portion of the content requests must be served by the edge servers that 
are constrained by their limited memory and bandwidth capacities. 
In this paper, we model these servers as loss servers~\cite{Kelly_loss} and aim at 
minimizing the number of requests 
blocked at these servers (and thus need to be sent to the central server).  
We do not consider queueing of the requests since we focus on delay-sensitive streaming services, 
which comprise a large proportion of the Internet's traffic today~\cite{CiscoWP17}.

Efficiency of such systems crucially depends on
the replication (also called allocation) policy used to populate the caches of the servers and
the request matching policy used to dispatch the incoming requests. 
In this paper, 
we first formulate the problem of computing the optimal
allocation policy, which, if combined with the optimal 
matching policy (maximum matching), maximizes the number requests served per unit time by the caching system
in the stationary regime. 
To the best of our knowledge, this is the first work that addresses
this joint allocation-matching problem.   
%
The joint problem for finite systems
is shown to be an  NP-hard problem by reduction
from the 3-partition problem.   
A polynomial-time greedy allocation scheme is proposed
and it is shown to achieve a constant factor approximation
of the optimal value. 

Next, we turn our attention to the dynamics of the system
which are determined by the matching policy in use. 
In earlier works, e.g., \cite{Tan_massoulie_Caching, Lelarge_caching}
a maximum matching scheme has been considered to match incoming requests to servers.
However, in the maximum matching scheme, the service of ongoing requests may be interrupted 
to incorporate a newly arrived request. Such interruption in service is clearly 
not desired and may cause significant delay in serving the requests. 
 We thus propose a simple randomized policy for request matching 
which does not cause interruption to the already existing requests.  We show that
this matching scheme coupled with the proposed greedy replication policy
or a previously studied~\cite{Tan_massoulie_Caching, Kleinrock_caching}
`proportional-to-product' replication policy
is asymptotically optimal in a limiting regime where the number of
cache servers and request arrival rates scale proportionally with each
other and the number of contents remains fixed.   Such a scaling regime corresponds to 
scenarios where a fixed number of most popular contents are served by a large number of cache servers.  
The proof of asymptotic optimality uses fluid limits of processes describing the dynamics of the system.
The fluid limit in our case cannot be described 
by ordinary differential equations (ODE's) since it
describes a non-smooth dynamical system.
The novelty in our approach lies in describing the fluid limit as a solution to a differential
inclusion (DI) system and showing that all trajectories of the solutions
converge to the same global attractor. In summary, our main contributions are as follows.

\begin{itemize}
\item We show that the joint problem is NP-hard and propose a
  polynomial-time approximation algorithm that achieves performance
  within a constant 
  of the optimal value;
  
\item We propose a randomized matching policy that avoids interruption of service 
and show that this policy is asymptotically optimal for large systems
when combined with the proposed allocation policies.

\item We employ a new approach based on the theory of differential inclusions 
to prove fluid limit results.

\item We also present extensive simulation results to compare the different 
allocation and matching schemes discussed in this paper. Near insensitivity of
the system to service time distributions is also studied. 
\end{itemize}

\noindent\paragraph*{Related Work}  Content placement in caching systems has been the 
subject of study for many years now.  
Of the numerous papers on this topic, we now mention a few that are most relevant to our work. 
In \cite{wireless_caching, Kleinrock_caching, Anand_caching}, optimal cache allocation policy 
was designed without taking into account the bandwidth restrictions of the servers.
A loss model of caching systems was first introduced in~\cite{Tan_massoulie_Caching}
in the context of peer-to-peer video on demand services. 
A replication policy in which the number of replicas is proportional
to the arrival rate of the contents was proposed and analyzed in conjunction
with the maximum matching algorithm which is different from the setting in our paper. 
In~\cite{Lelarge_caching}
a similar loss model was considered. However, the objective was to 
maximize the utilization of the resources as opposed to maximization of
throughput. 
In~\cite{Moharir_caching}, a discrete-time model
similar to ours is considered. The objective there is to minimize the 
expected transmission rate from the main server
in order to serve all requests in a time slot. 
A different scaling regime in which the number of contents is scaled is 
considered. We do not consider such a scaling regime in this paper 
since our focus is on a scenarios where only a fixed number of highly popular contents are present at any instant.
An online matching policy similar to the proposed matching policy was considered in
\cite{Stolyar_loss} for cloud computing systems. However, the setting there is completely different
as the servers do not have any memory restrictions.

The remainder of the paper is organized as follows.
Section~\ref{sec:model} introduces the system model.  In Section~\ref{sec:form} we formulate
the joint allocation and matching problem and analyze  the complexity of the problem.  In Section~\ref{sec:algos} we present  efficient approximate algorithms.
In Section~\ref{sec:grand} we present a matching policy that does not involve repacking of ongoing requests, and analyze the system in the large-systems asymptotic regime.
 We prove optimality in the
large-systems scaling regime. Section~\ref{sec:numerics} presents 
simulation results.
Finally, the paper is concluded in
Section~\ref{sec:conclusion}.

\section{System model}
\label{sec:model}

We consider a dynamic 
model of caching systems in which
requests for contents arrive at random instants and are served
by corresponding servers.  
The servers are assumed to be able to serve 
only a finite number of requests simultaneously.

\subsection{Server and storage Model} 
\label{sec:storage}

The caching system consists of $n$ servers and $m$ contents 
indexed by the sets $S=\cbrac{1,2,\ldots,n}$ and $C=\cbrac{1,2,\ldots,m}$, respectively.
We assume that each server $s \in S$ is capable
of storing up to $d_s \geq 1$ contents in its cache
and has a bandwidth of $U_s >0$, i.e., it can serve $U_s$ requests simultaneously. 
The {\em replication}, or \emph{allocation} policy is represented by a binary matrix $A=(a_{sc})_{s \in S, c \in C} \in \cbrac{0,1}^{nm}$,
with $a_{sc}=1$ if $c$ is stored in the cache of $s$ and $a_{sc}=0$, otherwise.
Thus, for any {\em feasible replication policy} $A=(a_{sc})_{s \in S, c \in C} $,
we have

\begin{equation}
\sum_{c \in C} a_{sc} \leq d_s \text{ for all } s \in S
\label{eq:deg_cons}
\end{equation}
The set of all feasible replication policies is denoted as $\mathcal{A} \subset \cbrac{0,1}^{nm}$.
The cache of each server is populated at $t=0$
according to some cache allocation policy $A \in \mc{A}$ 
and is kept unchanged for all $t \geq 0$. 
Servers are assumed to have 
different bandwidths and memory sizes from the sets $\cbrac{U_i, i \in \mc{I}}$
and $\cbrac{d_j, j \in \mc{J}}$, respectively, where $\mc{I}$ and $\mc{J}$ are some finite index sets.
The fraction of servers having bandwidth $U_i$ and memory size
$d_j$ is denoted as $\alpha_{ij}$ for each $(i,j) \in \mc{I} \times \mc{J}$.

\subsection{Service model}

Requests of contents are assumed to arrive one at a time according to simple point processes.
The average number of requests of a content $c \in C$
arriving per unit time is denoted by $\lambda_c$ and is called the {\em arrival rate}
or the {\em popularity} of the content.
The vector of content popularities is denoted as $\Lambda=(\lambda_c, c \in C)$ and is
assumed to be a known constant throughout the paper.
Typically, the popularities of news items containing videos or movies vary over periods
ranging from few hours to few weeks~\cite{femto_caching} (and have to be estimated periodically~\cite{Moharir_sigm_2014})
which are slower than the time scales of interest in the paper.
For this reason we do not consider the effect of time-varying popularities and errors in 
estimation of the popularities in this paper. 

Arriving requests are matched or assigned to servers according to some
matching scheme. Clearly, no matching policy can serve all arriving requests since
the servers have limited storage and bandwidth capacities.
The requests which cannot be served are {\em blocked} or dropped by the caching system and 
are immediately routed to a central server 
that stores all the contents.  
The accepted requests are assumed to stay in the system
for a random amount of time, independent and identically distributed (i.i.d) with unit mean. 

\section{Optimal allocation: A static formulation}
\label{sec:form}

Our first objective is the design of an allocation policy
which populates the caches at the start of the system.  
We seek an allocation $A$ that solves the {\em static} optimization problem of maximizing
the average number requests served per unit time by the caching system
operating in the stationary regime.
To formulate the problem, we consider a time window of unit length.
Let $X_c$, $c \in C$ denote the number of requests of content $c$
arriving in this window and define $X:=(X_c, c \in C)$.  We make the following approximations: 1) the cache servers are idle at the beginning of the time window, 2) all the requests that 
are accepted by the system in this time window stay in the system exactly till the end of the time window.
Although these approximations do not capture dynamics of the departures of the requests in this window,
they provide a good approximation of the stationary behavior of the caching system 
(since each request has an average service time of unit duration)
and are appropriate  when detailed statistics
of the arrival processes and service times are not available. We analyze the dynamics
of the caching system under specific assumptions on arrival processes and service time distributions
later in this paper.

Let $b_{sc}$ denote the total number
of requests of content $c \in C$ matched/assigned to a server $s \in S$ in the given time window.
Then, $B=(b_{sc})_{s \in S c \in C}$ must satisfy 
\begin{align}
&  \sum_{c \in C} b_{sc}  \leq U_s, \text{ } \forall s \in S, \label{eq:cons_u}\\
& \sum_{s \in S} b_{sc} \leq X_c, \text{ } \forall c \in C, \label{eq:cons_x}\\
& \indic\brac{b_{sc} > 0} \leq a_{sc}, \label{eq:cons_a}
\end{align}
where $\indic\brac{\cdot}$ denotes the indicator function.
Let $\mc{B}(A,X)$ denote the set of matchings $B$ which satisfy \eqref{eq:cons_u}-\eqref{eq:cons_a}.
Thus, under allocation policy $A$, the maximum total number of requests
that can be matched in this time window is $M(A,X)=\max_{B \in \mc{B}(A,X)}\sum_{s \in S c \in C} b_{sc}$.
Our goal is to find an allocation policy $A$ for which $M(A,X)$ is maximized, 
i.e., we aim to find a solution to $\max_{A \in \mc{A}} M(A,X)=\max_{A \in \mc{A}}\max_{B \in \mc{B}(A,X)} \sum_{s \in S, c \in C} b_{sc}$.


We note that in the above formulation the
random vector $X$ appears in the constraints. 
Hence, the solution $A^*$ of the above problem will only be optimal
in windows where the demand vector is exactly equal to $X$.
However, since random demands are unknown a priori and our goal is to find a solution which works well on average, 
we replace the vector
$X$ by its mean value, i.e., the {\em popularity vector}
$\Lambda=(\lambda_{c}, c\in C)$. Hence, we  consider the following problem:

\begin{equation}
\max_{A \in \mc{A}} M(A,\Lambda)=\max_{A \in \mc{A}}\max_{B \in \mc{B}(A,\Lambda)} \sum_{s \in S, c \in C} b_{sc}
\label{opt_semifinal}
\end{equation}
We refer to the above as the joint allocation-matching (JAM) problem
since in it the decision variable $A$ also affects the matching $B \in \mc{B}(A,\Lambda)$.
Our first result establishes the following equivalence.

\begin{proposition}
Problem~\eqref{opt_semifinal} is equivalent to the following problem
\begin{equation}
 \begin{aligned}
 & \underset{Z=(z_{sc})_{s\in S, c \in C}}{\text{Maximize}} & & \sum_{s \in S} \sum_{c \in C} z_{sc} \\
 & \text{subject to} & & \sum_{c \in C} z_{sc} \leq U_s, \text{ } \forall s \in S\\
 &&&  \sum_{s \in S} z_{sc}  \leq \lambda_c, \text{ } \forall c \in C\\
 &&&  \sum_{c \in C} \indic \brac{z_{sc} > 0}  \leq d_s, \text{ } \forall s \in S\\
 &&&  z_{sc} \in \mb{R}_+, \text{ } \forall s \in S, \forall c \in C 
 \label{opt_final}
 \end{aligned}
\end{equation}
Furthermore, if $Z^*=({z}_{sc}^*)_{s \in S, c \in C}$ 
is an optimal solution of \eqref{opt_final}, then 
an optimal solution ${A}^*$ of 
problem~\eqref{opt_semifinal}
can be found by setting ${a}_{sc}^*=\indic\brac{{z}_{sc}^* > 0}$, for all $(s,c) \in S \times C$. 
\end{proposition}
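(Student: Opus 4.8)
The plan is to prove the equivalence by exhibiting objective‑preserving correspondences between the feasible regions of the two problems in both directions, so that their optimal values coincide and the stated construction yields an optimizer. Before doing so I would fix the convention that, once the random demand $X$ has been replaced by its mean $\Lambda$, the matching variables $b_{sc}$ appearing in $\mc{B}(A,\Lambda)$ are treated as continuous (fluid) flows rather than integer request counts; this is exactly what allows a matching $B$ and a point $Z$ of \eqref{opt_final} to be identified. I expect this reconciliation of the integer interpretation of $b_{sc}$ in the request‑counting model with the continuous variable $z_{sc}\in\mb{R}_+$ of \eqref{opt_final} to be the only genuinely delicate point; the remaining steps are bookkeeping, the essential idea being that the outer maximization over the binary matrix $A$ in \eqref{opt_semifinal} is redundant because the cheapest allocation supporting a given flow simply stores content $c$ at server $s$ precisely when that server serves a positive amount of it.

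First I would establish that the value of \eqref{opt_semifinal} is at most that of \eqref{opt_final}. Take any $A\in\mc{A}$ and any $B\in\mc{B}(A,\Lambda)$, and set $Z:=B$. Constraints \eqref{eq:cons_u} and \eqref{eq:cons_x} (with $X=\Lambda$) are literally the first two constraints of \eqref{opt_final}. The storage (cardinality) constraint of \eqref{opt_final} is then obtained by summing \eqref{eq:cons_a} over $c\in C$ and invoking the feasibility \eqref{eq:deg_cons} of the allocation $A$:
\[
\sum_{c \in C} \indic\brac{z_{sc} > 0} \;=\; \sum_{c \in C} \indic\brac{b_{sc} > 0} \;\le\; \sum_{c \in C} a_{sc} \;\le\; d_s, \qquad \forall s \in S .
\]
Hence $Z$ is feasible for \eqref{opt_final} with identical objective $\sum_{s\in S,c\in C} z_{sc}=\sum_{s\in S,c\in C} b_{sc}$, so $M(A,\Lambda)$ never exceeds the optimum of \eqref{opt_final}.

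For the reverse inequality I would start from any feasible $Z$ of \eqref{opt_final} and define $a_{sc}:=\indic\brac{z_{sc}>0}$. The cardinality constraint of \eqref{opt_final} then becomes $\sum_{c\in C} a_{sc}\le d_s$, i.e. $A\in\mc{A}$. Setting $B:=Z$, constraints \eqref{eq:cons_u} and \eqref{eq:cons_x} hold by the first two constraints of \eqref{opt_final}, while \eqref{eq:cons_a} holds with equality by the very choice of $A$; thus $B\in\mc{B}(A,\Lambda)$ and the objective value $\sum_{s\in S,c\in C} z_{sc}$ is attained by the feasible pair $(A,B)$ in \eqref{opt_semifinal}. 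Combining the two inequalities shows the optimal values are equal. Finally, applying this backward construction to an optimal $Z^*$ of \eqref{opt_final} produces a pair $(A^*,B^*)$ feasible and optimal for \eqref{opt_semifinal} with $a^*_{sc}=\indic\brac{z^*_{sc}>0}$, which is precisely the claimed optimizer, completing the proof.
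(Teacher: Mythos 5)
Your proposal is correct and follows essentially the same route as the paper: a two-way, objective-preserving identification of feasible points ($Z\leftrightarrow B$ with $a_{sc}=\indic\brac{z_{sc}>0}$), giving both inequalities between the optimal values and the stated construction of $A^*$. The only cosmetic difference is that the paper explicitly zeroes out $z_{sc}$ when $a_{sc}=0$ in the reverse direction, which is vacuous since \eqref{eq:cons_a} already forces $b_{sc}=0$ there; your explicit remark that $b_{sc}$ is treated as a continuous flow once $X$ is replaced by $\Lambda$ matches the paper's implicit convention.
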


\begin{proof}
Let ${Z}=({z}_{sc})_{s \in S, c \in C}$ be a feasible solution
of~\eqref{opt_final}. 
Set ${a}_{sc}=\indic\brac{{z}_{sc} > 0}$ and ${b}_{sc}={z}_{sc}$
for all $s \in S$, $c \in C$.
Then clearly we have ${B}=({b}_{sc})_{s \in S, c \in C} \in \mc{B}\brac{A,\lambda}$. Furthermore,
$\sum_{s \in S} \sum_{c \in C} {z}_{sc}= \sum_{s \in S} \sum_{c \in C}{b}_{sc} \leq O_1$,
%
where $O_1$ denotes the optimal value of problem~\eqref{opt_semifinal}. Taking the maximum of
the LHS over the set of feasible solutions of \eqref{opt_final} (this is possible since
the feasible set of solutions is compact and the objective function is continuous) we have
$O_2 \leq O_1$,
%
where $O_2$ denotes the optimal value of problem~\eqref{opt_final}.

Conversely, suppose that ${A}=({a}_{sc})_{s \in S , c \in C} \in \mc{A}$,
${B}=({b}_{sc})_{s \in S , c \in C} \in \mc{B}(A,\lambda)$. 
For all pairs
$(s, c) \in S\times C$ 
such that ${a}_{sc}=0$ we set ${z}_{sc}=0$. 
For all other pairs
$(s,c) \in S \times C$ we set ${z}_{sc}={b}_{sc}$.
Clearly, $(z_{sc})_{s\in S,c \in C}$ is a feasible solution of problem \eqref{opt_final}. Moreover,
we have $\sum_{s \in S} \sum_{c \in C} b_{sc}=\sum_{s \in S} \sum_{c \in C} {a}_{sc}{b}_{sc} = \sum_{s \in S} \sum_{c \in C} {z}_{sc} \leq O_2$,
%
Now taking the maximum of the LHS over all feasible solutions of problem \eqref{opt_semifinal}
we obtain $O_1 \leq O_2$.
%
Hence,  we have $O_1=O_2$.
The first part of the proof also shows how to construct
an optimal solution of~\eqref{opt_semifinal}
from that of~\eqref{opt_final}.
\end{proof}

\vspace{-0.3cm}
\begin{theorem}
\label{thm:np}
Problem \eqref{opt_final} is strongly NP hard.
\end{theorem}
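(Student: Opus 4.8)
The plan is to reduce the \textsc{3-Partition} problem, which is strongly NP-complete, to the decision version of~\eqref{opt_final}. Recall that an instance of \textsc{3-Partition} is given by $3k$ positive integers $x_1,\ldots,x_{3k}$ together with a bound $T$ satisfying $\sum_{i=1}^{3k} x_i = kT$ and $T/4 < x_i < T/2$ for every $i$; the question is whether the indices can be split into $k$ triples, each summing to exactly $T$. The strictness $T/4 < x_i < T/2$ is precisely what forces every group summing to $T$ to contain exactly three elements, and since the problem stays NP-complete under unary encoding, it suffices to give a polynomial-time reduction whose numerical parameters remain polynomially bounded.

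Given such an instance, I would build a caching instance with $n=k$ servers and $m=3k$ contents: each server $s$ receives bandwidth $U_s = T$ and memory $d_s = 3$, and content $c$ receives popularity $\lambda_c = x_c$. Both the total demand $\sum_c \lambda_c$ and the total bandwidth $\sum_s U_s$ equal $kT$, so $kT$ is an a priori upper bound on the objective $\sum_{s,c} z_{sc}$; the decision question becomes whether the optimum of~\eqref{opt_final} equals $kT$. The memory bound $d_s = 3$ is the gadget that encodes the ``triple'' requirement.

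For the easy direction, a YES-instance with triples $G_1,\ldots,G_k$ induces the feasible point $z_{sc} = x_c$ for $c \in G_s$ and $z_{sc}=0$ otherwise: each server then stores exactly three contents, consumes bandwidth $\sum_{c \in G_s} x_c = T = U_s$, and every content is fully served, so the objective reaches $kT$. The converse is the crux, and I expect it to be the main obstacle, since an optimal $Z$ need not be integral and a content could a priori be split across several servers. Suppose the optimum equals $kT$. Because $\sum_s z_{sc} \le \lambda_c$ and the $\lambda_c$ sum to $kT$, optimality forces $\sum_s z_{sc} = \lambda_c$ for every $c$, and symmetrically $\sum_c z_{sc} = U_s$ for every $s$. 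The decisive step is a counting argument: the memory constraints bound the number of pairs $(s,c)$ with $z_{sc} > 0$ by $\sum_s d_s = 3k$, while each of the $3k$ contents must appear in at least one such pair in order to be served at all. Hence there are exactly $3k$ positive entries, every content sits in exactly one server, and every server holds exactly three contents; these triples partition the contents, and since each server runs at full bandwidth $T$ with its contents fully served, the popularities in every triple sum to exactly $T$, exhibiting a \textsc{3-Partition} solution.

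It remains to observe that $k$, $m = 3k$, and the values $U_s = T$, $d_s = 3$, $\lambda_c = x_c$ are all bounded by a polynomial in the size of the unary-encoded source instance, so the reduction is polynomial-time and does not inflate the magnitudes of the numbers. This preserves strong NP-hardness and establishes the claim.
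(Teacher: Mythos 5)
Your proof is correct and takes essentially the same approach as the paper's own proof: the identical reduction from \textsc{3-Partition} (servers with $U_s=T$, $d_s=3$, contents with $\lambda_c=x_c$, threshold $kT$) and the same counting argument on the number of positive entries of $Z$ to force each content into exactly one server and each server to hold exactly three contents. Your added remarks on unary encoding and the $T/4<x_i<T/2$ condition are sound and, if anything, make the strong NP-hardness claim more explicit than the paper does.
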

The proof of Theorem~\ref{thm:np} is provided in Appendix~\ref{pf:np}.
It uses a reduction from the 3-partition problem, which is a known NP-hard problem.
\vspace{-0.2cm}

\section{Approximation algorithms}
\label{sec:algos}
\vspace{-0.1cm}

Since the joint allocation-matching
problem is NP-hard, an efficient algorithm 
for finding an exact solution is out of reach (unless $P=NP$).
We therefore look for allocation algorithms which 
provide approximate solutions and are easy to implement. 
Specifically, we 
consider the following allocation policies:

\subsubsection{The \texttt{greedy} policy}: 
The \texttt{greedy} algorithm computes a feasible replication policy $A \in \mc{A}$ as follows: 
A flow is assigned to each server $s\in S$ and each content $c \in C$
and are
denoted as $\mathsf{flow}(s)$ and $\mathsf{flow}(c)$, respectively. 
Additionally, each server $s \in S$
is also assigned a degree, denoted as $\mathsf{deg}(s)$.
Initially, we set $\mathsf{flow}(s)=U_s$, $\mathsf{deg}(s)=d_s$
for all $s \in S$ and $\mathsf{flow}(c)=\lambda_c$ for all $c \in C$.
Then, in each iteration, a pair $(s,c) \in S\times C$ is found for which
${\mathsf{flow}(s)\mathsf{deg}(s)\mathsf{flow}(c)} > 0$
 and which maximizes $\min\brac{\mathsf{flow}(s),\mathsf{flow}(c)}$. 
 If such a pair $(s^*,c^*)$
is found then the flow of both $s^*$ and $c^*$ are decreased
by an amount $\mathtt{MatchedFlow}=\min\brac{\mathsf{flow}(s^*),\mathsf{flow}(c^*)}$
and the degree of $s^*$ is reduced by one. Furthermore, 
we set $a_{s^*c^*}=1$ and $z_{s^*c^*}=\mathtt{MatchedFlow}$.
The iterations continue
until no such pair $(s^*,c^*)$ can be found. The pseudocode of the algorithm
is given as Algorithm \ref{alg:greedy}.

Clearly, the \texttt{greedy} algorithm terminates in at most 
$m+n$ iterations and returns a feasible replication policy in $\mc{A}$. 
Furthermore, in each iteration, the optimal pair
$(s^*,c^*)$ can be found in at most
$O(m+n)$ steps. Thus, the worst case time complexity of the
\texttt{greedy} algorithm is $O((m+n)^2)$.
We refer to the allocation policy computed by the greedy algorithm as the \texttt{greedy}
allocation policy.

\begin{algorithm}[H]
	\small{\textbf{Inputs}: {$\mc{U}=\brac{U_s, s\in S}$, $\Lambda=\brac{\lambda_c, c \in C}$, $\mc{D}=\brac{d_s,s \in S}$}}\\
	\textbf{Output}: {$Z=(z_{sc})$, $A=(a_{sc})$}\\
	\textbf{Initialize}: {$\mathsf{flow}(s) \gets U_s, \mathsf{deg}(s) \gets d_s, \forall s \in S$; $\mathsf{flow}(c) \gets \lambda_{c}, \forall c \in C$}
	\begin{algorithmic}[1]
		\While{$\exists (s,c) \in S \times C$ s.t ${\mathsf{flow}(s)\mathsf{deg}(s)\mathsf{flow}(c)} > 0$}
		\State $s^* \gets \arg\max_{s \in S: \mathsf{deg}(s) > 0} \brac{\mathsf{flow}(s)}$
		\State $c^* \gets \arg\max_{c \in C} \brac{\mathsf{flow}(c)}$
		\State $\mathtt{MatchedFlow} \gets\min\brac{\mathsf{flow}(s^*), \mathsf{flow}(c^*)}$
		\State $\mathsf{flow}(s^*) \gets \mathsf{flow}(s^*)-\mathtt{MatchedFlow}$,  $\mathsf{flow}(c^*) \gets \mathsf{flow}(c^*)-				\mathtt{MatchedFlow}$,  $\mathsf{deg}(s^*) \gets \mathsf{deg}(s^*)-1$
		\State $a_{s^*c^*} \gets 1$, $z_{s^*c^*} \gets \mathtt{MatchedFlow}$
		\EndWhile
	\end{algorithmic}
	\caption{\texttt{greedy}$(\mc{U}, \Lambda, \mc{D})$}\label{alg:greedy}
\end{algorithm}

We now show that the \texttt{greedy} algorithm always achieves
at least $1/2$ of the optimal value of problem~\eqref{opt_final}.
To state the result, we denote the instance of 
problem~\eqref{opt_final}, defined by the vector of bandwidth
capacities $\mc{U}=(U_s, s \in S)$, vector of arrival rates $\Lambda=(\lambda_c,c \in C)$, and 
the vector of memory sizes $\mc{D}=(d_s,s \in S)$ by $\mathtt{JAM}(\mc{U}, \Lambda, \mc{D})$
and its optimal value by $\mathtt{JAM}^*(\mc{U}, \Lambda,\mc{D})$. 
The following theorem, whose proof is given in Appendix~\ref{proof:approx},
provides a performance guarantee for the \texttt{greedy} algorithm.


\begin{theorem}
\label{thm:approx}
The output of \texttt{greedy} 
on $\mathtt{JAM}(\mc{U},\Lambda,\mc{D})$ is at least 
$\frac{1}{2}\mathtt{JAM}^*(\mc{U},\Lambda,\mc{D})$
\end{theorem}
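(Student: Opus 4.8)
The plan is to bound the value $G$ returned by \texttt{greedy} against the optimum $O:=\mathtt{JAM}^*(\mc{U},\Lambda,\mc{D})$ by a charging argument rather than by a relaxation. Dropping the degree constraints gives a transportation problem of value $\min(\sum_s U_s,\sum_c\lambda_c)$, but this bound is hopelessly loose (a single high-bandwidth, unit-degree server already breaks any $\tfrac12$ guarantee against it), so the degree constraint must be kept in play. For the greedy output I would record the flow $g_c=\sum_s z_{sc}$ served on each content and the flow $g_s=\sum_c z_{sc}$ pushed by each server, so that $G=\sum_c g_c=\sum_s g_s$; for an optimal $Z^\star$ write $o_c=\sum_s z^\star_{sc}\le\lambda_c$. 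Let $P\subseteq C$ be the contents greedy exhausts (those with $\mathsf{flow}(c)=0$ at termination, equivalently $g_c=\lambda_c$), and for $c\notin P$ let $\phi_c=\lambda_c-g_c>0$ be the residual demand and $\Phi^\star=\max_c\phi_c$.

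First I would dispose of the easy case. At termination no pair has $\mathsf{flow}(s)\,\mathsf{deg}(s)\,\mathsf{flow}(c)>0$. If every content is exhausted ($P=C$), then $G=\sum_c\lambda_c\ge O$ and the claim is immediate. Otherwise some content carries positive residual flow, and then the stopping rule forces every server to satisfy $\mathsf{flow}(s)=0$ or $\mathsf{deg}(s)=0$. Accordingly I partition the servers into the bandwidth-saturated set $S_B=\{s:\mathsf{flow}(s)=0\}$, for which $g_s=U_s$, and the degree-saturated set $S_D=\{s:\mathsf{flow}(s)>0,\ \mathsf{deg}(s)=0\}$.

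The heart of the argument is the decomposition $O\le\sum_c g_c+\sum_{c\notin P}(o_c-g_c)^+=G+E$, which holds because $o_c\le\lambda_c=g_c$ for $c\in P$; it remains to prove the excess satisfies $E\le G$. Since $(o_c-g_c)^+\le\phi_c$ while $\sum_s z^\star_{sc}=o_c$ already exceeds this amount, I would split each content's excess among the servers that carry it, producing $y_{sc}\le z^\star_{sc}$ with $\sum_s y_{sc}=(o_c-g_c)^+$, so that $E=\sum_s Y_s$ with $Y_s:=\sum_c y_{sc}$. For $s\in S_B$ we get $Y_s\le\sum_c z^\star_{sc}\le U_s=g_s$. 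For $s\in S_D$ the key observation is that each of the $d_s$ matches made by $s$ must be content-exhausting (else $s$ would finish bandwidth-saturated), so its matched amount equals the current largest residual content flow $\Phi(t):=\max_c\mathsf{flow}(c)$; as $\Phi(t)$ is non-increasing and equals $\Phi^\star$ at termination, every one of $s$'s matches has size at least $\Phi^\star$, giving $g_s\ge d_s\Phi^\star$. Because optimal serves at most $d_s$ contents at $s$ and each $y_{sc}\le\phi_c\le\Phi^\star$, we obtain $Y_s\le d_s\Phi^\star\le g_s$. Summing, $E=\sum_s Y_s\le\sum_s g_s=G$, hence $O\le 2G$, i.e.\ $G\ge\tfrac12\mathtt{JAM}^*(\mc{U},\Lambda,\mc{D})$.

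The main obstacle, and the real content of the proof, is handling the degree-saturated servers $S_D$: their greedy bandwidth $g_s$ may be far below $U_s$, so the optimal flow they carry cannot be charged to their own used bandwidth. The inequality $g_s\ge d_s\Phi^\star$ is exactly what rescues this case and is where the factor $2$ is paid; it relies entirely on the two greedy design choices, namely that a match always consumes the largest available content flow and that this maximum only decreases over time. I would also verify the boundary behaviour carefully — ties where $\mathsf{flow}(s)=\mathsf{flow}(c)$ and servers whose flow reaches $0$ before their degree — to confirm the $S_B/S_D$ dichotomy is exhaustive, and check that the recorded $(z_{sc})$ is genuinely feasible for \eqref{opt_final} so that its objective value is indeed $G$.
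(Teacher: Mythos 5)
Your argument is correct, and it is a genuinely different proof from the one in the paper. The paper proves the bound by a residual-instance (local-ratio style) argument: it introduces a support-restricted problem $\mathtt{JAM}(\mc{U},\Lambda,\mc{D},\Gamma)$, fixes the support $H^*$ of an optimal solution, and shows that each \texttt{greedy} iteration of matched flow $f_k$ can decrease the optimum of the restricted residual instance by at most $2f_k$ (increasing one server capacity and one content capacity by $v$ adds at most $2v$ to the optimum); telescoping over iterations gives the factor $2$. You instead produce a static certificate at termination: the decomposition $O\le G+E$ with the excess $E$ charged to servers, the dichotomy that every server ends bandwidth-saturated or degree-saturated whenever some content retains positive residual flow, and the key inequality $g_s\ge d_s\Phi^\star$ for degree-saturated servers, which follows from \texttt{greedy} always consuming the currently largest residual content flow together with the monotone decrease of that maximum. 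All the steps check out: the splitting $y_{sc}\le z^\star_{sc}$ with $\sum_s y_{sc}=(o_c-g_c)^+$ exists because $(o_c-g_c)^+\le o_c$; for $s$ degree-saturated each match strictly exhausts a content (otherwise $s$ would end with zero flow), so its size equals $\max_c\mathsf{flow}(c)\ge\Phi^\star$ at that iteration; and the degree constraint on the optimal solution caps the number of nonzero $y_{sc}$ at $d_s$. The paper's route is more modular and iteration-local, which makes the "$2f_k$ per step" accounting mechanical; yours isolates more transparently \emph{why} the two greedy design choices (max-residual content selection and the resulting monotonicity of $\Phi$) are exactly what pays for the factor $2$, and it avoids introducing the auxiliary support-restricted problem altogether. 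Both yield the same constant.
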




\subsubsection{Proportional to product (\texttt{p2p}) policy}

Under the \texttt{p2p} policy, a server with memory size $d_j$, $j \in \mc{J}$,
is allocated all the contents belonging to a set $K \subseteq C$ of size $d_j$ with probability 

\begin{equation}
p_{jK}= \frac{1}{Z_j} \prod_{c \in K} \hat{\lambda}_c,
\label{eq:p2p}
\end{equation}
independently of all other servers in the system, where 
$\hat{\lambda}_c=\lambda_c/\sum_{c' \in C} \lambda_{c'}$
denotes the normalized arrival rate of content $c$ and 
$Z_j=\sum_{K \subseteq C, \abs{K}=d_j} \prod_{c \in K} \hat{\lambda}_c,$. 
This scheme was proposed in~\cite{Tan_massoulie_Caching}.
Unlike the \texttt{greedy} policy, it does not take into account the bandwidths
of the servers. Furthermore, it is difficult to provide any 
performance guarantee for this scheme.
Nevertheless, we analyze the dynamics
and evaluate the performance of the system under the \texttt{p2p} policy later in the paper. 

\subsubsection{Uniform (\texttt{unif}) policy}

As a baseline for comparison, we consider a naive strategy for
replication where a server with memory size $d_j$, $j\in \mc{J}$,
is populated by all the contents of the set $K \subseteq C$ of size $d_j$ with
probability

\begin{equation}
p_{jK}= \frac{1}{\binom{m}{d_j}}
\label{eq:unif}
\end{equation}
Note that~\eqref{eq:unif} does not depend on the particular set $K$
and is the same for all $K$ of the same size.
Furthermore, \eqref{eq:unif} follows from \eqref{eq:p2p} if 
$\hat{\lambda}_c=1/m$ for all $c \in C$. Thus, the \texttt{unif}
policy treats all contents to be equally popular.
It is easy to implement in practice since it does not require the knowledge
of the popularities of the contents.

\section{Dynamics of the system: Asymptotic optimality}
\label{sec:grand}

We now analyze the dynamics of the system when 
the caches have been populated by one of the algorithms discussed in the previous section.  
To describe the dynamics of the caching system, we need also to specify 
the matching scheme used to assign
each incoming request to one of the servers.
In previous works e.g. \cite{Tan_massoulie_Caching,Lelarge_caching}
the maximum matching algorithm has been considered. 
In this algorithm, an incoming request 
is accepted if a matching $B=(b_{sc})_{s \in S, c \in C}$
satisfying the constraints~\eqref{eq:cons_u} and \eqref{eq:cons_a} can be found
such that $\sum_{s \in S, c \in C} b_{sc}$ equals the total number of requests in the system
including the new request.
Such a matching, although optimal, is hard to compute in practice 
and  may potentially involve {\em repacking} or reassignment
of the ongoing requests to other available servers. 
Such repacking would cause undesirable interruptions of service and may lead to increased delay.  We therefore look for 
matching policies which do not involve repacking of the ongoing requests. 
We present next a simpler matching strategy.

\subsection{Random Available Server (RAS) matching policy}

In this policy, each newly arrived content request
is assigned to a server chosen uniformly at random from
the set of all servers that store the content and are able
to serve an additional request of the content. 
If no such server is available, then the request is blocked. 
This scheme can be implemented by maintaining
a list of available servers for each content at a central job 
dispatcher that makes the matching decisions. Note that
it is not necessary for the central job dispatcher to keep track of the number of jobs
in each server. It is sufficient to just keep track of whether a server is operating at
its maximum bandwidth capacity or not.
This can be achieved by sending a message from a server back to the job dispatcher
whenever a job leaves the server previously operating at its maximum bandwidth capacity.
Since such updates occur in the background, when a new request arrives, 
it can be immediately matched to an available server. 
Unlike the maximum matching algorithm, 
the RAS scheme does not cause repacking of existing requests  in the system.  
Next, we analyze the dynamics of the system under the RAS matching policy
under specific assumptions on the arrival and service time processes for large system sizes.

\subsection{Large system asymptotics}

We assume that 
the requests of each content $c$ arrive according to a Poisson process with rate $\lambda_c$,
independent of all other processes. 
Furthermore, the service time of each request is assumed to be exponentially distributed
with unit mean.\footnote{We later show numerically that our results 
do not depend on the type of service time distribution.}
Let $\rho:=\sum_{c \in C} \lambda_c/n\sum_{i \in \mc{I}, j \in \mc{J}} \alpha_{ij} U_i$
denote the load on the system.

We are specifically interested in an asymptotic scaling regime where the number of servers $n$ 
goes to infinity keeping the load $\rho$ and the proportions $\alpha_{ij}$, $i \in \mc{I}, j \in \mc{J}$ fixed. 
This is achieved by keeping the same catalog $C$ of contents but scaling 
the arrival rate of each content linearly with $n$, i.e., $\lambda_{c}=n\bar\lambda_{c}$ for all $c \in C$.
We define $\bar{\lambda}=\sum_{c \in C} \bar\lambda_c$.
Note that the normalized arrival rates $\hat{\lambda}_c$ remains the same.
This represents a scenario where the system size scales with the demands of 
the contents.


Before analyzing the dynamics of the caching system, 
we first determine the fraction of servers in a particular cache configuration
under a given allocation policy in the limiting system.
Under a given allocation policy,  let $q_{ijK}^{(n)}$ denote the fraction
of servers having
bandwidth $U_i$ ($i \in \mc{I}$) and  memory size $d_j$ ($j \in \mc{J}$) that are storing 
all the contents belonging to the set $K \subseteq C$ in the $n$th system.
We call these servers to be in {\em configuration} $(i,j,K)$ and denote the set of all
such configurations as $\mc{L}$.
Define $q_{ijK}:=\lim_{n\to \infty} q_{ijK}^{(n)}$ for each $(i,j,K) \in \mc{L}$, if the limit exists.
Clearly, for the \texttt{p2p} and the \texttt{unif} allocation policies $q_{ijK}^{(n)}=q_{ijK}=p_{jK}$,
where $p_{jK}$ is defined in \eqref{eq:p2p} and \eqref{eq:unif}, respectively.
The following lemma shows that $q_{ijK}$ exists for the \texttt{greedy} policy and further characterizes it.  
We first denote by $q_{ijc}$ the value of $q_{ijK}$ for $K=\cbrac{c}$ and 
define $\theta_c:=\sum_{i \in \mc{I}, j\in \mc{J}} \alpha_{ij} U_i q_{ijc}$ to be the normalized capacity allocated to content $c$.  
The lemma then states that for $\rho<1$ 
the capacity allocated to a content is equal to the arrival rate of the content, 
and for $\rho>1$, unpopular contents are allocated zero capacity.  
The proof of the lemma is given in Appendix~\ref{proof:greedy}.

\begin{lemma}
\label{lem:greedy}
Under the \texttt{greedy} allocation policy, we have 
$q_{ijK}=0$ for $\abs{K} \geq 2$.
Furthermore, for $\rho \leq 1$ we have 
$\theta_c=\bar\lambda_c$, $\forall c \in C$.
%
%
For $\rho > 1$, we have the following: 
If the popularities are ordered as $\bar\lambda_1> \bar\lambda_2>\ldots>\bar\lambda_m > 0$
and with $c^*$  such that $\frac{\bar{\lambda}}{\rho}  \in \left(\sum_{c'=1}^{c^*-1} \right.\bar\lambda_{c'} -(c^*-1)\bar\lambda_{c^*}, \left. \sum_{c'=1}^{c^*} \bar\lambda_{c'} -c^*\bar\lambda_{c^*+1}\right]$ then 
\begin{equation}
\theta_c= \begin{cases}
                   \bar{\lambda}_c- \frac{1}{c^*}\sbrac{\sum_{c'=1}^{c^*} \bar\lambda_{c'}-\frac{\bar{\lambda}}{\rho}} & \text{ if } 1\leq c\leq c^*\\
                   0 & \text{ if } c^*+1 \leq c \leq m
                   \end{cases}
\label{eq:theta2}
\end{equation}
\end{lemma}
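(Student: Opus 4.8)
The plan is to exploit the separation of scales in this regime: the content flows are initialized at $\lambda_c = n\bar\lambda_c = \Theta(n)$ while every server flow is bounded by $U_{\max} := \max_{i \in \mc I} U_i = O(1)$. Consequently, in the rule $\mathtt{MatchedFlow} = \min\brac{\mathsf{flow}(s^*),\mathsf{flow}(c^*)}$, the minimizer is the server flow in all but the last few matches assigned to any content. I would first establish $q_{ijK} = 0$ for $\abs{K} \geq 2$ by a counting argument. A server stores two or more contents only if it is matched more than once, and a server matched to a content $c$ and later re-matched must have retained positive flow after serving $c$, which forces $\mathtt{MatchedFlow} = \mathsf{flow}(c)$, i.e.\ content $c$ was exhausted at that match. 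Since each content can be exhausted at most once (its flow stays $0$ thereafter), the total number of re-matches is at most $m$, so at most $m$ servers store $\geq 2$ contents. Hence $\sum_{\abs{K}\geq 2}\sum_{i,j} q_{ijK}^{(n)} \leq m/n \to 0$, giving $q_{ijK}=0$ for $\abs{K}\geq2$.

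For the capacity characterization I would work with the total served flow per content, $\sigma_c^{(n)} := \sum_{s\in S} z_{sc}$. Because only the $O(m)$ partially used servers carry $z_{sc} < U_s$ or store several contents, one has $\abs{\theta_c^{(n)} - \sigma_c^{(n)}/n} = O\brac{m U_{\max}/n} \to 0$, so it suffices to identify $\lim_n \sigma_c^{(n)}/n$. The core is an \emph{equalization} property stemming from the fact that \texttt{greedy} always serves the content of currently maximal remaining flow. Let $r_c^{(n)} := \lambda_c - \sigma_c^{(n)}$ be the residual demand at termination. For any two contents $a,b$ that both receive positive service, examining the last iteration at which each was selected (at that moment it had the maximal flow, while the other content's flow only decreases afterwards) yields $\abs{r_a^{(n)} - r_b^{(n)}} \le U_{\max}$; and for any content $c$ receiving \emph{no} service, comparing $\lambda_c$ against the flow of a served content at its last selection gives $\lambda_c \le r_a^{(n)} + U_{\max}$. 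Dividing by $n$ and letting $n\to\infty$, all served contents share a common residual level $\delta \ge 0$, while every unserved content satisfies $\bar\lambda_c \le \delta$.

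With the popularities strictly ordered, this forces the served set to be exactly $\cbrac{c : \bar\lambda_c > \delta} = \cbrac{1,\dots,c^*}$, so that $\theta_c = \bar\lambda_c - \delta$ for $c \le c^*$ and $\theta_c = 0$ otherwise, with $\bar\lambda_{c^*} > \delta \ge \bar\lambda_{c^*+1}$. It remains to pin down $\delta$ by conservation. Writing $\bar U := \sum_{i\in\mc I, j\in\mc J}\alpha_{ij}U_i$ (so $\rho = \bar\lambda/\bar U$), the total served flow $\sum_c \sigma_c^{(n)}$ exceeds neither the demand $n\bar\lambda$ nor the bandwidth $n\bar U = \sum_s U_s$, and \texttt{greedy} halts only when one of these caps is tight up to the $O(1)$ slack of partially used servers; hence $\sum_c \theta_c = \min\brac{\bar\lambda,\bar U}$. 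If $\rho \le 1$ then $\min\brac{\bar\lambda,\bar U}=\bar\lambda$, so $\sum_{c\le c^*}\brac{\bar\lambda_c - \delta} = \bar\lambda$; since $\sum_{c\le c^*}\bar\lambda_c \le \bar\lambda$ and $\delta\ge0$ this forces $\delta = 0$, $c^*=m$, and $\theta_c=\bar\lambda_c$. If $\rho > 1$ then $\min\brac{\bar\lambda,\bar U}=\bar U < \bar\lambda$, so $\sum_{c\le c^*}\brac{\bar\lambda_c - \delta} = \bar U$ gives $\delta = \frac{1}{c^*}\brac{\sum_{c'=1}^{c^*}\bar\lambda_{c'} - \bar U}$; substituting $\bar U = \bar\lambda/\rho$ into $\theta_c = \bar\lambda_c - \delta$ yields \eqref{eq:theta2}, and the two-sided bound $\bar\lambda_{c^*} > \delta \ge \bar\lambda_{c^*+1}$ rearranges exactly into the stated membership interval for $\bar\lambda/\rho$, which also shows $c^*$ is unique.

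The main obstacle I anticipate is not the combinatorics but the limiting rigor: one must verify that the $O(1/n)$ corrections (the at most $m$ partially used servers and the per-match discretization error $U_{\max}$) vanish uniformly, that the residual levels genuinely converge rather than merely along subsequences, and that the water level $\delta$ and index $c^*$ are well defined (existence and uniqueness of the solution of the conservation equation under the strict ordering $\bar\lambda_1 > \dots > \bar\lambda_m > 0$). A secondary point, should individual configuration fractions be needed, is that the deterministic order in which \texttt{greedy} consumes servers (non-increasing in bandwidth) makes each $q_{ijc}^{(n)}$ converge as well, although only the aggregate $\theta_c$ is characterized here.
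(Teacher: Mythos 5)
Your proposal is correct and follows essentially the same route as the paper's proof: a counting argument showing that only $O(m)$ of the $n$ servers can ever store two or more contents, the water-filling/equalization property of \texttt{greedy} (all served contents terminate with the same residual flow up to an additive $U_{\max}$), and flow conservation to pin down the common water level, which rearranges into exactly the stated interval for $c^*$. The only cosmetic difference is that you treat $\rho\le 1$ and $\rho>1$ uniformly through the conservation identity $\sum_c\theta_c=\min(\bar\lambda,\bar\lambda/\rho)$, whereas the paper disposes of the $\rho\le 1$ case separately by arguing that \texttt{greedy} cannot terminate while any content retains more than $U_{\max}$ of unmatched flow.
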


Let $x^{(n)}_{i,j,K,r}(t)$, for $r \in [0, U_i]$,
denote the fraction of servers in configuration $(i,j,K) \in \mc{L}$ serving at least
$r$ requests at time $t \geq 0$.
Clearly, the process $x^{(n)}=(x^{(n)}_{i,j,K,r}(t), (i,j,K) \in \mc{L}, 0\leq r \leq U_i, t \geq 0)$
is Markovian and takes values in the set $\mc{W}$, where
\begin{multline*}
\mc{W}:=\left\{w=\brac{w_{i,j,K,r}, (i,j,K) \in \mc{L}, r \in \mb{Z}_+}:\right.1=w_{i,j,K,0}\\
\geq w_{i,j,K,1}\geq \ldots \geq w_{i,j,K,U_i} \geq 0=w_{i,j,K,r},
\left. \forall r > U_i \right\}.
\end{multline*}
%
It is easy to see that according to the RAS policy, the Markov process $x^{(n)}$ jumps
from a given state $w \in \mc{W}$ to the state $w+e_{i,j,K,r}/n\alpha_{ij}q_{ijK}^{(n)} \in \mc{W}$ ($r \geq 1$)
with rate $$\sum_{c \in K} n \bar{\lambda}_c \frac{\alpha_{ij} q^{(n)}_{ijK} (w_{i,j,K,r-1}-w_{i,j,K,r})\indic\brac{w_{i,j,K,U_i} < 1}}{\sum_{K': c \in K'} \sum_{i,j} \alpha_{ij}q^{(n)}_{ijK'}(1-w_{i,j,K',U_i})}, 
$$
%
where $e_{i,j,K,r}$ denotes the unit vector with unity in position $(i,j,K,r)$.
The transition corresponds to an arrival of a request for content $c \in K$.
Similarly, the rate of downward transition from $w \in \mc{W}$ to $w-e_{i,j,K,r}/n\alpha_{ij}q_{ijK}^{(n)}$
can be computed to be $rn \alpha_{ij} q^{(n)}_{ijK}(w_{i,j,K,r}-w_{i,j,K,r+1})$.

We are interested in the limiting behavior of the process $x^{(n)}$ as $n \to \infty$.
We first notice from its transition rates, that $x^{(n)}$
is a {\em density dependent jump Markov process}~\cite{Kurtz_ODE, Mitzenmacher_thesis,ArpanSSY} with a limiting ($n \to \infty$)
{\em conditional drift} given by the mapping $h:=(h_{i,j,K,r}, (i,j,K) \in \mc{L}, r \in \mb{Z}_+)$ on $\mc{W}$,
defined as  $h_{i,j,K,r}(w)=0$ for $r \in \cbrac{0} \cup [U_i+1, \infty)$ and
\begin{align}
&h_{i,j,K,r}(w)=\sum_{c \in K} \bar{\lambda}_c \frac{(w_{i,j,K,r-1}-w_{i,j,K,r})}{\sum_{K': c \in K'} \sum_{i,j} \alpha_{ij}q_{ijK'}(1-w_{i,j,K',U_i})}
\nonumber\\ & \times \indic\brac{w_{i,j,K,U_i} < 1} 
-r(w_{i,j,K,r}-w_{i,j,K,r+1}),  \text{ for } r \geq 1, \label{eq:ode1}
\end{align}
for each $(i,j,K) \in \mc{L}$. 
For systems in which the limiting drift $h$ is Lipschitz continuous, the classical results
of Kurtz~\cite{Kurtz_ODE} imply that the process $x^{(n)}$ converges in distribution 
to the unique deterministic process $x=(x(t), t \geq 0)$ satisfying $\dot{x}=h(x)$. The process $x$ is called the {\em mean field limit}
or the {\em fluid limit} of the system.
However, since in our case the RHS of~\eqref{eq:ode1}
has discontinuities (due to the presence of the indicator terms), the solution to $\dot{x}=h(x)$ is not well defined. 
To overcome this, we define a process
$x$ as the solution of a {\em differential inclusion} (DI)
given as $\dot{x}\in H(x)$, where $H$ is set valued mapping on $\mc{W}$
defined as the cartesian product of set valued maps $H_{i,j,K,r}$ over all $(i,j,K,r)$.
For each $(i,j,K) \in\mc{L}$ we define
$H_{i,j,K,r}(w)=\cbrac{0}$ for $r \in \cbrac{0} \cup [U_i+1, \infty)$ and 
\begin{align}
&H_{i,j,K,r}(w)=\sbrac{0, \sum_{c \in K} \frac{\bar\lambda_{c}}{\alpha_{ij}q_{ijK}}} \indic\brac{w_{i,j,K,U_i} = 1}  \nonumber\\
&+\sum_{c \in K} \bar{\lambda}_c \frac{(w_{i,j,K,r-1}-w_{i,j,K,r})}{\sum_{K': c \in K'} \sum_{i,j} \alpha_{ij}q_{ijK'}(1-w_{i,j,K',U_i})}
\nonumber\\ & \times \indic\brac{w_{i,j,K,U_i} < 1}  -r(w_{i,j,K,r}-w_{i,j,K,r+1}), \text{ for } r \geq 1 \label{eq:di1}.
\end{align}
We note that that the set $H_{i,j,K,r}(w)$ is the convex hull of all the limit points of $h_{i,j,K,r}(w)$.
In the next theorem, whose proof is given in Appendix~\ref{proof:mfconv},
we show that the DI $\dot{x} \in H(x)$ has well defined solutions and the process
$x^{(n)}$ converges to one of the it solutions $n \to \infty$ in probability.
\begin{theorem}
	\label{thm:mfconv}
	For any $x_0 \in \mc{W}$, the set $\mc{S}_{x_0}$ of solutions to
	the DI $\dot{x} \in H(x)$ with $x(0)=x_0$ is non-empty.
	Furthermore, if $x^{(n)}(0) \xrightarrow{p} x_0 \in \mc{X}$ as $n \to \infty$, then for all $T > 0$ we have 
	$\inf_{x \in \mc{S}_{x_0}}\sup_{t \in [0,T]} \norm{x^{(n)}(t)-x(t)} \xrightarrow{p} 0$ as $n \to \infty$, where the process $x$ is a solution of
	the DI $\dot{x} \in H(x)$.
\end{theorem}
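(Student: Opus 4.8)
The plan is to combine the standard martingale-plus-tightness machinery for density-dependent jump processes with the closure theorem for differential inclusions. For existence of solutions, I would first check that $H$ is a Marchaud map on the compact set $\mc{W}$: each $H_{i,j,K,r}(w)$ is either a singleton or a bounded interval, so $H$ has nonempty, compact, convex values and is uniformly bounded on $\mc{W}$. Upper semicontinuity (equivalently, closed graph) follows from the stated fact that $H_{i,j,K,r}(w)$ is the convex hull of all limit points of $h_{i,j,K,r}$ at $w$: the only source of discontinuity is the pair of indicators $\indic\brac{w_{i,j,K,U_i}<1}$ and $\indic\brac{w_{i,j,K,U_i}=1}$, whose one-sided limits are already absorbed into the interval $\sbrac{0,\sum_{c\in K}\bar\lambda_c/\alpha_{ij}q_{ijK}}$. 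Once $H$ is known to be Marchaud, the classical existence theorem for differential inclusions (e.g.\ Aubin--Cellina) gives $\mc{S}_{x_0}\neq\emptyset$.

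For the convergence, I would use the Dynkin decomposition $x^{(n)}(t)=x^{(n)}(0)+\int_0^t \beta^{(n)}(x^{(n)}(s))\,ds+M^{(n)}(t)$, where $\beta^{(n)}$ is the prelimit drift read off from the transition rates (which converges pointwise to $h$ since $q^{(n)}_{ijK}\to q_{ijK}$) and $M^{(n)}$ is a martingale. Since each jump has magnitude $O(1/n)$ while the aggregate jump rate is $O(n)$, the predictable quadratic variation of $M^{(n)}$ is $O(1/n)$, so Doob's maximal inequality yields $\sup_{t\in[0,T]}\norm{M^{(n)}(t)}\xrightarrow{p}0$. Because $\beta^{(n)}$ is uniformly bounded on the compact $\mc{W}$, the integral term is uniformly Lipschitz in $t$; together with the vanishing martingale this makes $\cbrac{x^{(n)}}$ $C$-tight, so it is relatively compact with continuous limit points.

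The step I expect to be the main obstacle is identifying each subsequential limit with a DI solution, because $\beta^{(n)}(x^{(n)}(s))$ need not converge pointwise in $s$---near the saturation boundary $w_{i,j,K,U_i}=1$ it oscillates. Fixing a convergent subsequence $x^{(n)}\to x$ in $C(\sbrac{0,T},\mc{W})$, I would extract by weak-$*$ compactness of the uniformly bounded family $\brac{\beta^{(n)}(x^{(n)}(\cdot))}$ in $L^\infty(\sbrac{0,T})$ a further weak-$*$ limit $v$, and pass to the limit in the integral equation (using $M^{(n)}\to 0$ uniformly) to obtain $x(t)=x_0+\int_0^t v(s)\,ds$. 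That $v(s)\in H(x(s))$ for a.e.\ $s$ is exactly the content of the closure theorem: upper semicontinuity of $H$ together with the convex-hull characterization gives, for every $\varepsilon>0$ and all large $n$, that $\beta^{(n)}(x^{(n)}(s))$ lies within $\varepsilon$ of the set $H(x(s))$, so by Mazur's lemma the weak limit $v(s)$ belongs to the closed convex set $H(x(s))$. Hence every subsequential limit lies in $\mc{S}_{x_0}$, and a routine contradiction argument upgrades this to $\inf_{x\in\mc{S}_{x_0}}\sup_{t\in\sbrac{0,T}}\norm{x^{(n)}(t)-x(t)}\xrightarrow{p}0$.
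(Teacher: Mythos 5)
Your proposal is correct in substance, but it takes a more self-contained route than the paper. The paper's proof is a verification of the hypotheses of a black-box result (Theorem~1 of the cited Gast reference on mean-field limits with discontinuous drifts): it checks exactly the four conditions you also isolate --- nonempty, closed, convex values; the bound $\norm{H(w)}\leq D(1+\norm{w})$; upper semicontinuity; and the identity $H(w)=\mathsf{conv}\brac{\mathsf{acc}_{w_k\to w}h(w_k)}$ --- and then invokes that theorem for both existence and the $\inf$-$\sup$ convergence. You instead reconstruct the proof of that theorem: Aubin--Cellina for existence under the Marchaud conditions, the Dynkin/martingale decomposition with $O(1/n)$ quadratic variation and Doob's inequality for $C$-tightness, and weak-$*$ compactness of the drift in $L^\infty$ plus the closure theorem and Mazur's lemma to identify subsequential limits as DI solutions; the oscillation of $\beta^{(n)}(x^{(n)}(s))$ near the saturation boundary is indeed the crux that the convex-hull condition is designed to absorb. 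What the paper's route buys is brevity; what yours buys is independence from the citation and an explicit account of where convexity and upper semicontinuity are actually used. Two small points to tighten: the discontinuity of $H$ could in principle also come from the denominator $\sum_{K':c\in K'}\sum_{i,j}\alpha_{ij}q_{ijK'}(1-w_{i,j,K',U_i})$ vanishing, but since the configuration $(i,j,K)$ itself contributes a positive term whenever $w_{i,j,K,U_i}<1$, the fraction stays bounded by $\bar\lambda_c/(\alpha_{ij}q_{ijK})$ and the discontinuity set is exactly the saturation boundary, as the paper notes; and your claim that $\beta^{(n)}(x^{(n)}(s))$ eventually lies within $\varepsilon$ of $H(x(s))$ needs uniformity in $s$, which follows from upper semicontinuity on the compact $\mc{W}$ together with $q^{(n)}_{ijK}\to q_{ijK}$ and the uniform convergence of $x^{(n)}$ along the subsequence. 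Neither is a gap in the approach.
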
  

Thus far we have seen that the limiting dynamics of the system
for any finite time $t \geq 0$ is described by a solution of the DI $\dot{x} \in H(x)$. 
We are also interested in the stationary behavior of the limiting system, i.e.,
the behavior of $x^{(n)}(t)$ as both $n \to \infty$ and $t \to \infty$.\footnote{For finite $n$, the system always 
reaches stationarity because the process $x^{(n)}$
is irreducible on a finite state space.}
Specifically, we are interested in the total number of jobs 
processed by the system in the stationary regime.
Let  $Y^{(n)}(t)=n \sum_{i,j,K} \alpha_{ij} q_{ijK}^{(n)}\sum_{r=1}^{U_i} x_{i,j,K,r}^{(n)}(t)$ be the total number of
requests in the $n$th system at time $t \geq 0$ and 
let  $Y^{(n)}(\infty)$ denote its random
stationary value. Define $y^{(n)}(t):=Y^{(n)}(t)/n$ for all $t \in [0,\infty]$ and $y(t):=\sum_{i,j,K} \alpha_{ij} q_{ijK}\sum_{r=1}^{U_i} x_{i,j,K,r}(t)$.
In the next theorem, whose proof is provided in Appendix~\ref{proof:global_stability}, we show that
under the \texttt{greedy} and the \texttt{p2p} combined with the RAS matching policy, $y(\infty)=\lim_{t \to \infty} y(t)=\min\brac{\bar{\lambda}, \bar{\lambda}/\rho}$ and $\lim_{n \to \infty} y^{n}(\infty)=y(\infty)$.


\begin{theorem}
\label{thm:global_stability}
Under the \texttt{greedy} allocation policy combined with the RAS matching policy or under the \texttt{p2p} 
allocation policy with $d_s=1$ for all $s \in S$ combined with the RAS matching policy, 
we have $y(\infty)=\lim_{t \to \infty} y(t)=\min(\bar{\lambda}, \bar\lambda/\rho)$.
Furthermore, the sequence $(y^{(n)}(\infty))_n$ is tight and  
$y^{(n)}(\infty) \xrightarrow{p} y(\infty)$.
\end{theorem}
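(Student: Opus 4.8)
The plan is to prove the three assertions in turn: global convergence of the fluid trajectory $y(t)$, tightness of $(y^{(n)}(\infty))_n$, and the interchange of the limits $n\to\infty$ and $t\to\infty$. The starting observation, valid for both policies in the statement, is that every server stores a single content: for \texttt{greedy} this is Lemma~\ref{lem:greedy} ($q_{ijK}=0$ for $\abs{K}\geq 2$), while for \texttt{p2p} with $d_s=1$ every feasible set $K$ has $\abs{K}=1$. Consequently only configurations $(i,j,\cbrac{c})$ carry positive mass, and in \eqref{eq:ode1}--\eqref{eq:di1} the routing denominator for content $c$ reduces to $\sum_{i,j}\alpha_{ij}q_{ijc}(1-w_{i,j,c,U_i})$, which involves only servers storing $c$. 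Hence the DI decouples into $m$ independent per-content subsystems.

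The key step is that the aggregate in-service quantity $n_c(t):=\sum_{i,j}\alpha_{ij}q_{ijc}\sum_{r=1}^{U_i} x_{i,j,c,r}(t)$, whose sum over $c$ is $y(t)$, obeys an autonomous one-dimensional reflected dynamics. Summing \eqref{eq:ode1} over $r$ (the arrival terms telescope to $\bar\lambda_c(1-x_{i,j,c,U_i})/D_c$ and the departure terms to $\sum_{r} x_{i,j,c,r}$, where $D_c:=\sum_{i,j}\alpha_{ij}q_{ijc}(1-x_{i,j,c,U_i})$) and then over $(i,j)$ weighted by $\alpha_{ij}q_{ijc}$, the denominator $D_c$ cancels and yields $\dot n_c=\bar\lambda_c-n_c$ on the interior $\cbrac{n_c<\theta_c}$. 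This reflects the fact that, under RAS, an arrival of content $c$ is blocked only when the entire content-$c$ capacity is saturated, i.e.\ when $n_c=\theta_c$; the set-valued part of \eqref{eq:di1} supplies the Skorokhod reflection at the boundary $n_c=\theta_c$. The resulting reflected scalar equation has $\min(\bar\lambda_c,\theta_c)$ as its unique globally attracting point from any $n_c(0)\in[0,\theta_c]$, so $n_c(t)\to\min(\bar\lambda_c,\theta_c)$ for every solution of the DI, independently of the detailed state. Summing over $c$ and invoking Lemma~\ref{lem:greedy} for \texttt{greedy} (resp.\ the direct computation $\theta_c=\bar\lambda_c/\rho$ for \texttt{p2p} with $d_s=1$) gives $y(t)\to\sum_{c}\min(\bar\lambda_c,\theta_c)=\min(\bar\lambda,\bar\lambda/\rho)=:y(\infty)$, which is the first claim.

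Tightness of $(y^{(n)}(\infty))_n$ is immediate, since the number of requests in service never exceeds the total bandwidth, giving $y^{(n)}(t)\in[0,\sum_{i,j}\alpha_{ij}U_i]=[0,\bar\lambda/\rho]$ almost surely, and a uniformly bounded family is tight. For the convergence in probability I would use an interchange-of-limits argument. The state space $\mc{W}$ is compact, so the stationary laws $\pi^{(n)}$ of $x^{(n)}$ form a tight family; let $\pi$ be any subsequential weak limit. Starting the chain in stationarity and applying Theorem~\ref{thm:mfconv}, the fluid limit issued from $\pi$ stays distributed as $\pi$ for all $t$, so $\pi$ is an invariant measure of the DI $\dot x\in H(x)$. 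Since every solution of the DI satisfies $y(x(t))\to y(\infty)$ by the previous paragraph, stationarity of the law of $y(x(t))$ under $\pi$ together with bounded convergence forces $y=y(\infty)$ $\pi$-almost surely. As all subsequential limits place the $y$-coordinate at the single value $y(\infty)$, we conclude $y^{(n)}(\infty)\xrightarrow{p} y(\infty)$.

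I expect the main obstacle to be this final interchange step: justifying rigorously that weak limits of the stationary measures are invariant for the \emph{differential inclusion} — where solutions need not be unique and the drift is discontinuous — and that invariance, combined with the scalar global attractor, pins down the $y$-marginal. The first two claims, by contrast, follow cleanly from the exact scalar reduction of the per-content dynamics and a trivial boundedness bound.
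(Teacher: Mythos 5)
Your proposal is correct and follows essentially the same route as the paper: reduce to the per-content aggregate (your $n_c$ is the paper's $y_c$), telescope the DI over $r$ and the configuration classes so the routing denominator cancels, identify $\min(\bar\lambda_c,\theta_c)$ as the global attractor of the resulting scalar reflected inclusion, and conclude via tightness plus the standard invariant-measure interchange argument. The only point the paper treats more explicitly is uniqueness of the scalar DI's solution (via a one-sided Lipschitz bound and an explicit closed form), which you gloss as "Skorokhod reflection"; otherwise the two arguments coincide.
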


\subsection{Asymptotic optimality}

The above theorem establishes that the \texttt{p2p} and \texttt{greedy}
the allocation schemes when combined with the RAS matching policy 
are optimal in the limiting system.
To see this, we find an upper bound on ${Y}^{(n)}(t)$ for any 
combination of allocation scheme and matching scheme.
A trivial upper bound on ${Y}^{(n)}(t)$
is clearly the total bandwidth capacity of the system, i.e., ${Y}^{(n)}(t) \leq n\sum_{ij}\alpha_{ij} U_i=n\bar{\lambda}/\rho$ for all $t \in [0,\infty]$.
Another upper bound on $Y^{(n)}(\infty)$ can be obtained by comparing
the system with an hypothetical caching system in which each server has infinite bandwidth and 
each content is stored in at least in one server. Clearly, this system behaves as an $M/M/\infty$ system
serving all incoming requests. Hence, the stationary number  of requests $\bar{Y}(\infty)$
in this hypothetical system is a Poisson random variable with mean $n \bar\lambda$.
By a simple coupling argument, it follows that $Y^{(n)}(\infty) \leq \bar{Y}(\infty)$ almost surely for all $n$. 
Thus, combining both upper bounds we have $y^{(n)}(\infty) \leq \min\brac{\bar{Y}(\infty)/n, \bar\lambda/\rho}$. 
Hence, $\limsup_{n \to \infty} y^{(n)}(\infty) \leq \min\brac{\bar{\lambda},\bar\lambda/\rho}$. 
But Theorem~\ref{thm:global_stability} shows that for the proposed schemes $\lim_{n \to \infty} y^{(n)}(\infty) = \min\brac{\bar{\lambda},\bar\lambda/\rho}$.
Hence, the proposed schemes are asymptotically optimal. It is easy to see that the corresponding optimal (minimal)
blocking probability is given by $(1-\min(1,1/\rho))=(1-1/\rho)^+$.

\section{Numerical Results}
\label{sec:numerics}

We now numerically evaluate the performance of the caching system
under the various allocation policies combined with the RAS matching policy for finite system sizes.
Specifically, we consider a system with $d_s=2, U_s=1 \forall s \in S$,
$m=500$. 
The popularities of the contents are chosen
according to a Zipf like distribution~\cite{Zipf_study, Moharir_caching}, where 
the normalized arrival rate $\hat{\lambda}_c$ 
of any content $c \in C=\cbrac{1,2,\ldots,m}$ is chosen to be
$\hat{\lambda}_c= c^{-\eta}/\sum_{c' \in C}( c')^{-\eta}$ for $\eta =2$. 
The system is simulated for different values
of $n$ and $\rho$ for 160000 arrivals. In Figures~\ref{fig:aRAS} and~\ref{fig:bRAS}, we plot the stationary blocking probability 
of a request as a function the number of servers for different allocation policies
combined with the RAS matching policy for $\rho=0.8$ and $\rho=1.2$, respectively. 
We observe that under both \texttt{greedy}
and \texttt{p2p} policies the blocking probability approaches the optimal lower bound $(1-1/\rho)^+$
as $n$ increases. In Table~\ref{tab:conv}, we show the difference between the blocking probability
of the finite system $P_{\text{blocking}}^{(n)}$ and the optimal lower bound $P_{\text{opt}}=(1-1/\rho)^+$
as a function of the system size $n$ for the \texttt{p2p} algorithm for $\rho=0.8$.
We  observe that the distance decreases as $O(n^{-1/2})$.
The same is observed for the \texttt{greedy} policy.
Such rate of convergence is in accordance with the recent 
results in the literature on mean field convergence~\cite{Ying_conv}. \vspace{-0.2cm}
\begin{figure}
	\centering     
	\subfigure[$\rho=0.8$]{\label{fig:aRAS}\includegraphics[scale=0.39]{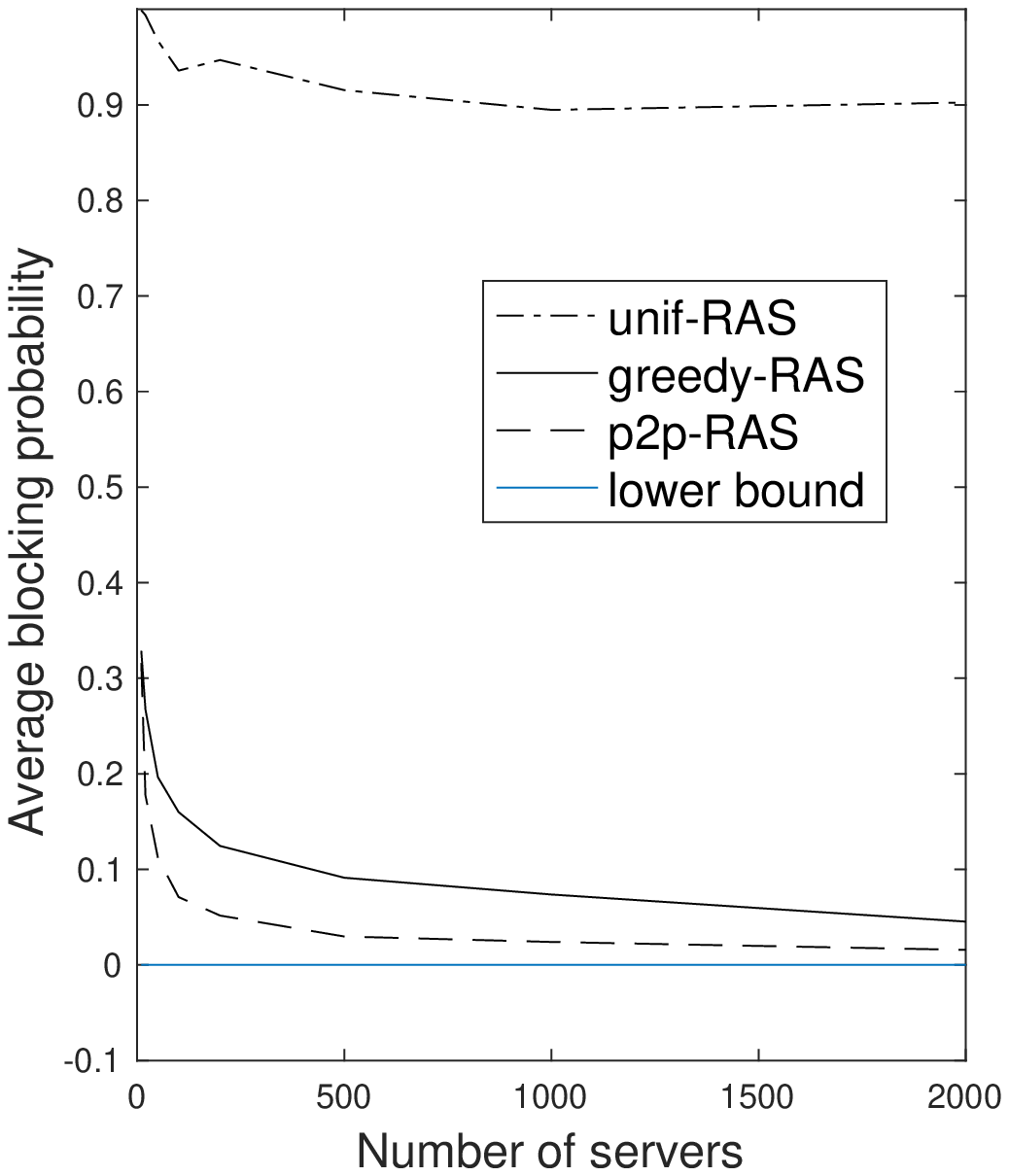}}
	\subfigure[$\rho=1.2$]{\label{fig:bRAS}\includegraphics[scale=0.39]{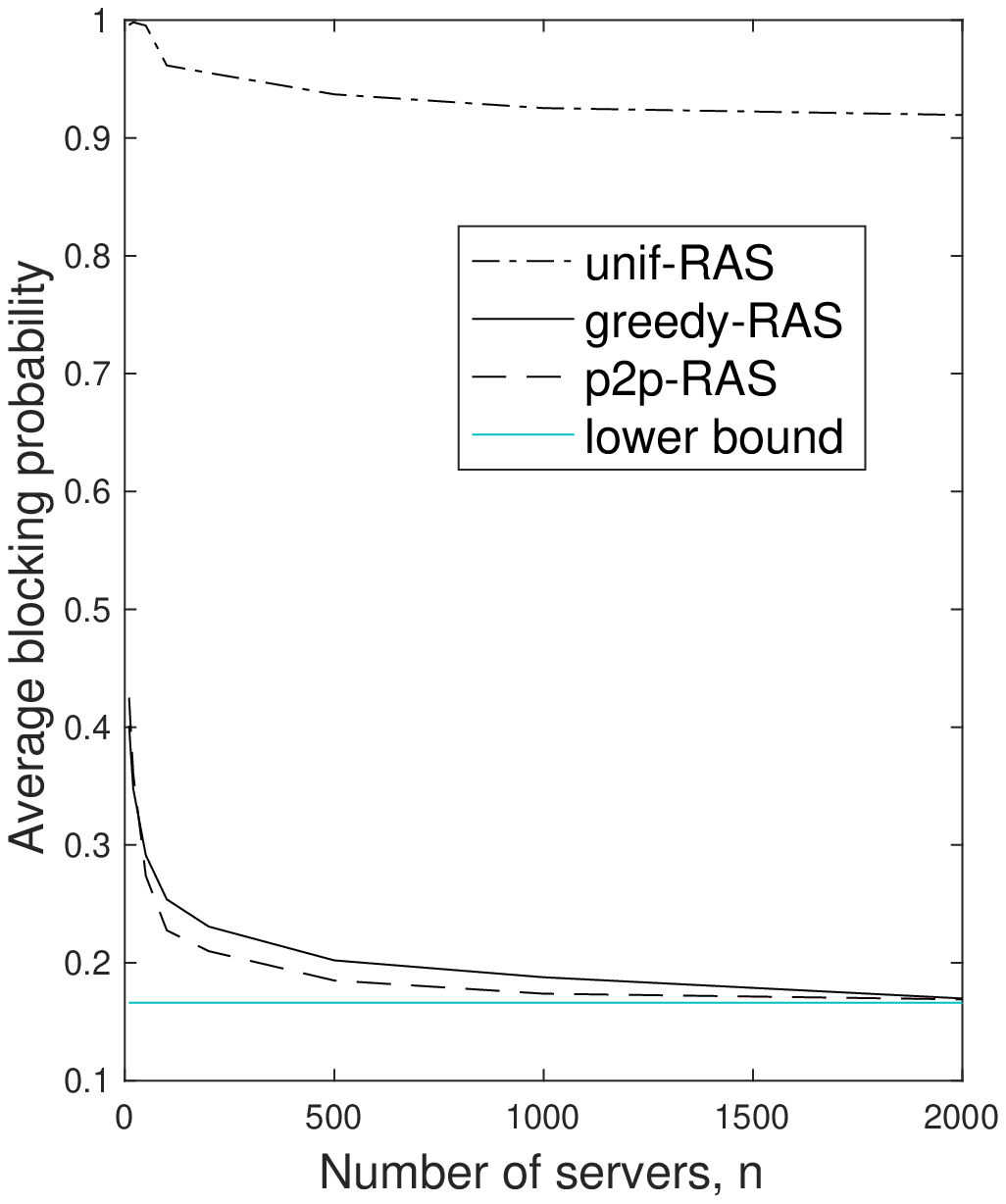}}
	\caption{Average blocking probability as a function of $n$}\label{fig:n_loss_ras}
\end{figure}
\begin{table}
	\renewcommand{\arraystretch}{0.9}
	\caption{Convergence of $P_{\text{blocking}}^{(n)}$}
	\centering
		\begin{tabular}{|c  | c |}
			\hline
			$n$&  $\abs{P_{\text{blocking}}^{(n)}-P_{\text{opt}}}$\\
			\hline
			10 & 0.2612\\
			20 & 0.1837\\
			50 & 0.1130\\
			200& 0.0501\\
			1000& 0.0220\\
			2000& 0.0148\\
			\hline
		\end{tabular}%
	\label{tab:conv}
\end{table}

The blocking probability as a function of the load $\rho$ is shown in Figure~\ref{fig:rho_loss} for $n=400$.
We observe that for the given parameter setting the \texttt{p2p} policy
performs better than the \texttt{greedy} policy, but the difference is not significant at high loads. In Figure~\ref{fig:eta_loss},
we plot the average blocking probability of requests as a function of the decay factor
$\eta$ of the popularity distribution for $\rho=0.8$ and $n=400$.
We observe that as $\eta$ increases and the popularity distribution becomes more skewed towards 
higher popularity contents, the performance of the \texttt{unif} policy degrades
as it still treats all contents to be equally popular. On the other hand,
for both \texttt{greedy} and \texttt{p2p} policies the performance improves. This implies
that in these policies higher popularity contents are given more priority than
lower popularity contents.

%

\begin{figure*}[t!]
    \centering
    \subfigure[Load]{\label{fig:rho_loss}\includegraphics[width=0.33\textwidth]{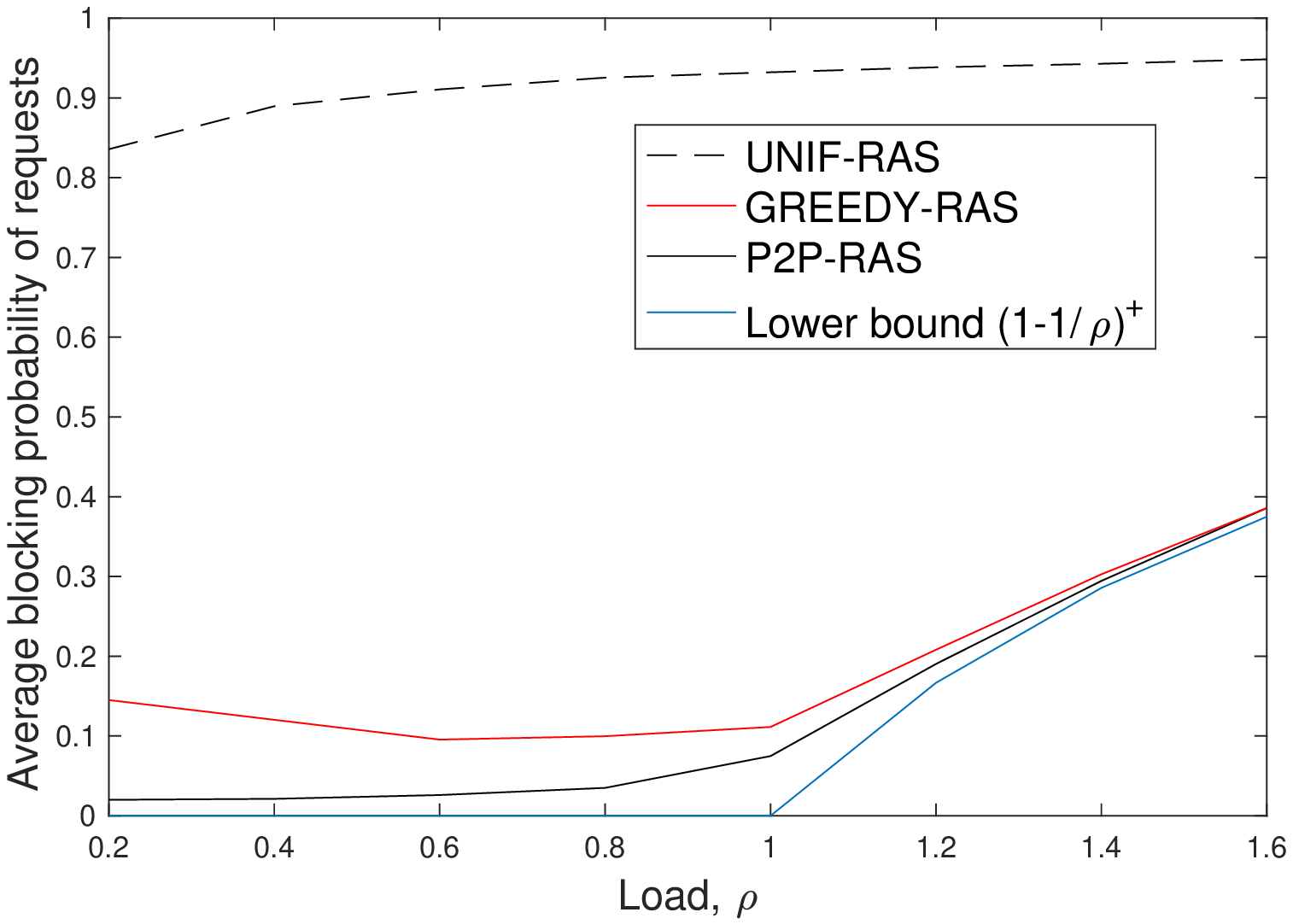}}
   \subfigure[Popularity decay factor]{\label{fig:eta_loss}\includegraphics[width=0.33\textwidth]{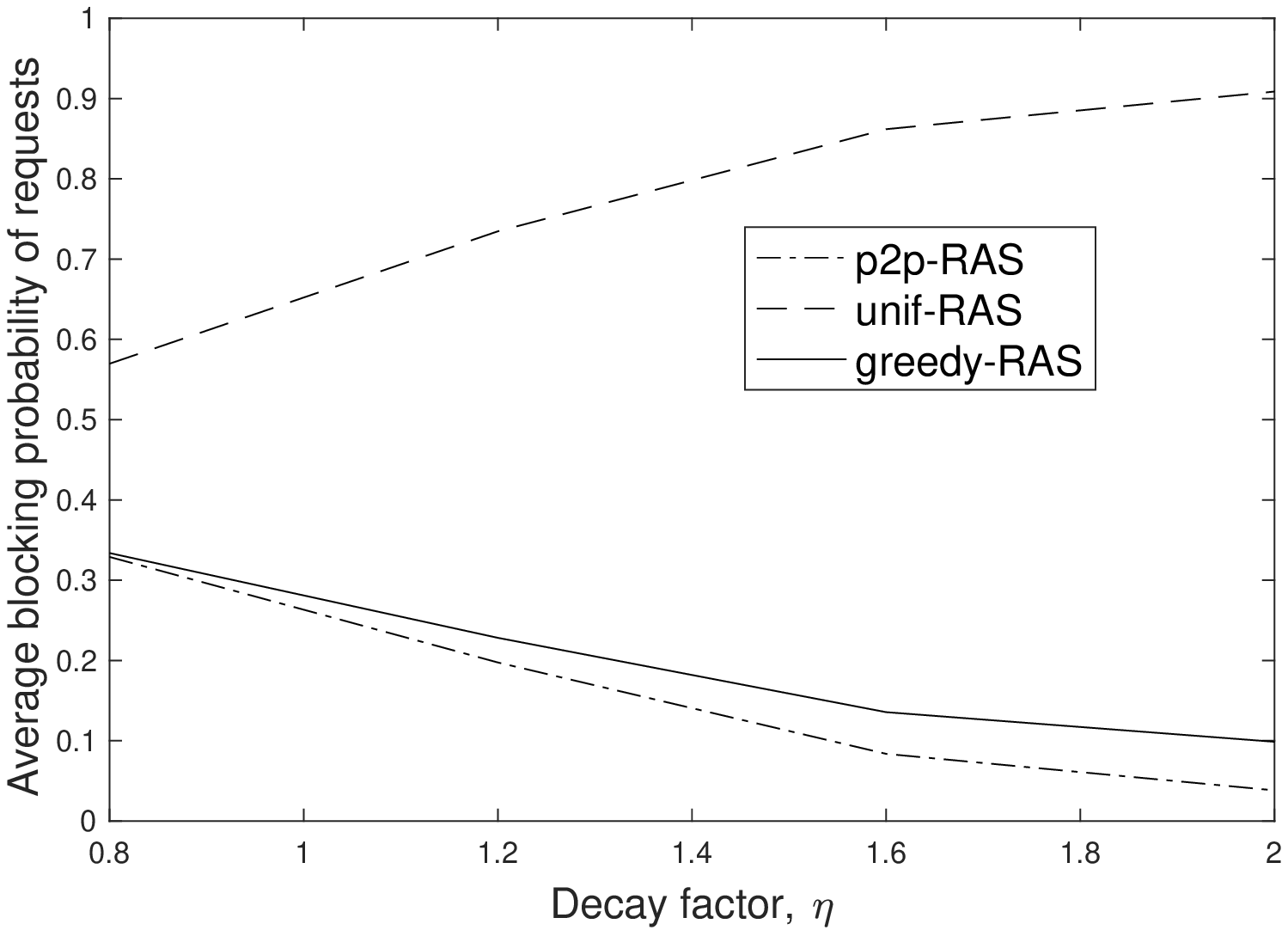}}
   \subfigure[Number of contents]{\label{fig:content_loss}\includegraphics[width=0.32\textwidth]{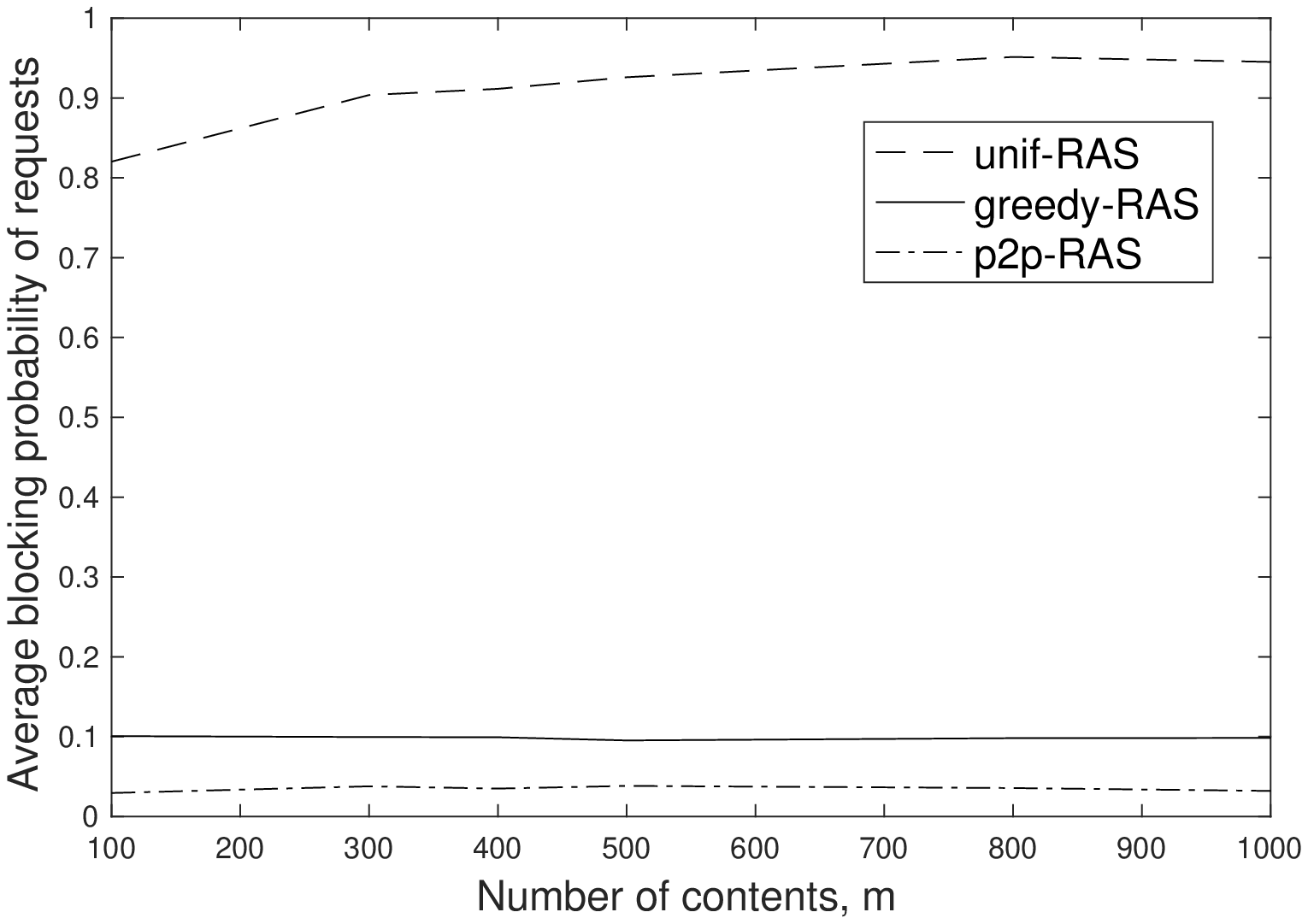}}
   \caption{Average blocking probability as a function of (a) the load $\rho$, (b) the popularity decay factor $\eta$, and (c)
   	the number of contents $m$, for the various allocation schemes.}
\end{figure*}

Next, we study the sensitivity of the system to the type of distribution of the service times
of the requests. For this purpose, we consider the following types of service time distribution with unit mean:
Exponential, Constant, Lognormal with probability density function (PDF) 
given by $f(x)=(1/x\sqrt{2\pi}) e^{-(\ln x+0.5)^2/2}$, and Pareto with PDF given by
$f(x)=10(0.9)^{10}/x^{11}$. For each type of distribution 
we simulate the system (for 160000 arrivals) with the same parameters as described above
under the \texttt{greedy} algorithm (combined with the RAS matching). The blocking probability of requests is tabulated 
as a function of $\rho$ in Table~\ref{tab:insense} for different distributions.
We observe that for the same value of $\rho$ the blocking probabilities are nearly the same for
all distributions. The values become even more closer when the system size is increased.
This seems to suggest the system approaches {\em insensitivity} as $n \to \infty$.
Such asymptotic insensitivity is known to hold for similar systems~\cite{Arpan_perform,thiru_loss_arxiv}. However, a proof
remains  an open problem.

Finally, in Figure~\ref{fig:content_loss} we plot the average blocking probability 
of the requests as a function of the number of contents for $\rho=0.8$, $\eta=2$,
and $N=400$. We observe that for the \texttt{p2p} and the \texttt{greedy}
policies the average blocking probability remains almost constant with the variation 
of the number of contents. This is because even though the total number of contents is increasing,
only a small fraction of them are highly popular. As a result the addition of more contents does not affect the
performance of the system. This also justifies why we scale the arrival rates of the contents
instead of the number of contents in Section~\ref{sec:grand}.
%
\begin{table}
	\renewcommand{\arraystretch}{0.9}
	\caption{Sensitivity to the type of service time distributions}
	\centering
		\begin{tabular}{| c | c | c | c | c |}
			\hline
			$(n,\rho)$&  Exponential & Constant & Lognormal & Pareto\\
			\hline
			(400, 0.4) & 0.1200 & 0.1219 & 0.1241  &0.1207\\
			(400, 0.8) & 0.0989  & 0.1028 & 0.1006 & 0.0999\\
			(400, 1.2) & 0.2084 & 0.2084  & 0.2110 & 0.2092\\
			(400, 1.6) & 0.3863  & 0.3854  & 0.3878 & 0.3874\\
			(1000, 0.4) & 0.0916  & 0.0912  & 0.0915 & 0.0919\\
			\hline
		\end{tabular}%
	\label{tab:insense}
\end{table}


\section{Conclusions}
\label{sec:conclusion}

We considered the joint problem of 
content placement and request matching 
in a distributed network of content servers.  
We formulated the problem in an optimization framework 
and showed that it is NP-hard.  
We then presented a polynomial-time greedy algorithm that approximates to 
a constant of the optimal value.  
We then considered a large systems scaling regime, where we showed that a simple greedy matching policy is asymptotically optimal.
We employed a new approach based on the theory of differential inclusions to prove the fluid limit results.
Many interesting avenues of future work exist.
One such challenge is to find the combination of optimal
allocation and matching algorithms for systems
where both the number of servers and the number of contents
scale proportionally to each other.


%
\bibliographystyle{unsrt}
\bibliography{caching}  
%
%
\appendices

\section{Proof of Theorem~\ref{thm:np}}
\label{pf:np}

We prove this by reducing the 3-partition problem
to a decision problem version of~\eqref{opt_final}.
This will show that the optimization version of \eqref{opt_final} is NP-hard.
The 3-partition problem is defined as follows:
Given a finite set $G$ of
$3n$ elements, a number $L > 0$ and a mapping $\mathsf{size}: G \to \mb{R}_{++}$
 satisfying $\sum_{g \in G} \mathsf{size}(g) = nL$, does their
exist $n$ disjoint subsets $G_1, G_2, \ldots, G_n$ of $G$, each containing three elements,
such that for all $1 \leq k \leq n$, $\sum_{g \in G_k}\mathsf{size}(g) = L$?  

We map each element $g \in G$ to a unique content $c \in C$ with $\lambda_{c}=\mathsf{size}(g)$.
Thus we have $m=3n$ contents. 
The sets $G_1, G_2, \ldots, G_n$ correspond
to $n$ servers each of which can store $d_s=3$
contents and simultaneously serve $U_s=L$
requests. 
Clearly, for the above defined instance
of problem \eqref{opt_final}, the optimal objective function value is bounded
above by $nL$.
We now show that the objective function 
value exactly equals the upper bound $nL$ if and only if
there exists a solution of the 3-partition problem.

Suppose that the optimal objective function value
of the above defined instance of~\eqref{opt_final} is $nL$
and it is achieved at $Z^*=(z^*_{sc})_{s \in S, c \in C}$, i.e.,
 $nL=\sum_{s \in S c \in C}z^*_{sc}$. 
Since for each $s \in S$, $\sum_{c \in C}z_{sc}^{*} \leq L$,
and for each $c \in C$, $\sum_{s\in S} z_{sc}^{*}\leq \lambda_c$,
we must have $\sum_{c \in C} z^*_{sc}=L$,  $\forall s \in S$
and $\sum_{s \in S} z^*_{sc}=\lambda_c > 0 $, $\forall c \in C$
%
Hence, for each content $c \in C$ there must be one server $s\in S$ such that $z_{sc}^* >0$, i.e., 
$\sum_{s \in S} \indic \brac{z^*_{sc} > 0} \geq  1.$
The above implies $\sum_{c \in C} \sum_{s \in S}  \indic \brac{z^*_{sc} > 0} \geq 3n$.
%
But since every server can store at most three contents, we also have
$\sum_{c \in C} \sum_{s \in S}  \indic \brac{z^*_{sc} > 0} \leq 3n.$
Hence, we have $\sum_{c \in C} \sum_{s \in S}  \indic \brac{z^*_{sc} > 0} = 3n$,
which implies that $\sum_{s \in S} \indic \brac{z^*_{sc} > 0} = 1$, $\forall c \in C$
and $\sum_{c \in C} \indic \brac{z^*_{sc} > 0} = 3$, $\forall s \in S$.
%
Hence, a solution of the 3-partition problem is found.  
The converse follows following the same line of arguments.


\section{Proof of Theorem~\ref{thm:approx}}
\label{proof:approx}

To prove the theorem we first introduce a slightly modified version of problem~\eqref{opt_final}
by adding a constraint which requires
that only pairs $(s,c)$ belonging to a given set $\Gamma \subseteq S \times C$
can be assigned a non-zero value of $z_{sc}$, 
i.e., $z_{sc}=0$ for all $(s,c) \notin \Gamma$. 
This modified problem  is denoted as 
$\mathtt{JAM}(\mc{U},\Lambda,\mc{D}, \Gamma)$
and its optimal value is denoted as $\mathtt{JAM}^*(\mc{U},\Lambda,\mc{D}, \Gamma)$.
We note that $\mathtt{JAM}(\mc{U},\Lambda,\mc{D})$ is a special case of
$\mathtt{JAM}(\mc{U},\Lambda,\mc{D}, \Gamma)$
with $\Gamma= S \times C$. 
Clearly, if $\mc{U} \leq \mc{U}'$, $\Lambda \leq \Lambda'$, $\mc{D} \leq \mc{D}'$, $\Gamma\subseteq \Gamma'$, 
then $\mathtt{JAM}^*(\mc{U},\Lambda,\mc{D}, \Gamma) \leq \mathtt{JAM}^*(\mc{U}',\Lambda',\mc{D}', \Gamma')$.
Let $H^*=\cbrac{(s,c) \in S \times C: z_{sc}^* > 0}$ denote the collection of $(s,c)$
	pairs that are assigned non-zero values of $z_{sc}^*$
	in the optimal solution of $\mathtt{JAM}(\mc{U},\Lambda,\mc{D})$. Clearly, we have
	$\mathtt{JAM}^*(\mc{U},\Lambda,\mc{D}, H^*)=\mathtt{JAM}^*(\mc{U},\Lambda,\mc{D})$.

	We now generate a sequence of tuples $(\mc{U}_k,\Lambda_k,\mc{D}_k, H^*_k, G_k)$ 
	for each iteration $k \geq 0$ of the \texttt{greedy} algorithm (applied on $\mathtt{JAM}(\mc{U},\Lambda,\mc{D})$) as follows:
	For $k=0$ we set $\mc{U}_0=\mc{U},\Lambda_0=\Lambda,\mc{D}_0=\mc{D}, H^*_0=H^*$, $G_0=S\times C$.
	For subsequent values of $k$, let $f_k (=\mathtt{MatchedFlow})$ denote the flow
	found by the \texttt{greedy} algorithm in its $k^{\textrm{th}}$ iteration
	and let $(s_k,c_k)$ ($=(s^*,c^*)$) denote the corresponding server-content pair.
	We set $G_{k+1} = G_k-(s_k,c_k)$, $mc{U}_{k+1} = \mc{U}_{k}-f_k e_{s_k}^{(n)}$,
	$\Lambda_{k+1} = \Lambda_{k}-f_k e_{c_k}^{(m)}$, $\mc{D}_{k+1} = \mc{D}_{k}-e^{(n)}_{s_k}$,
	and $H^{*}_{k+1} = H^{*}_k-(s_k,c_k)$
	%
	where $e_r^{(l)}$ denotes
	the standard $l$-dimensional unit vector having one in the $r^{\textrm{th}}$
	component.  
	
	Note that it may so happen 
	that $(s_k,c_k) \notin H_k^*$ for some $k$ (but $(s_k,c_k)$ will always be in $G_k$). In such cases,
	the above update  sets $H_{k+1}^*=H_{k}^*$. Hence, $H^{*}_k \subseteq G_k$ is maintained for all $k$.
	Therefore, we have $\mathtt{JAM}^*(\mc{U}_k,\Lambda_k,\mc{D}_k, H^*_k) \leq \mathtt{JAM}^*(\mc{U}_k,\Lambda_k,\mc{D}_k, G_k)$.
	If the \texttt{greedy} algorithm terminates in the $q^{\textrm{th}}$
	iteration, then we must have $f_q=0$ and $\mathtt{JAM}^*(\mc{U}_q,\Lambda_q,D_q, G_q)=0$.
	Therefore, $\mathtt{JAM}^*(\mc{U}_q,\Lambda_q,\mc{D}_q, H^{*}_q)=0$. 
	Furthermore, for each iteration $k$ we prove that the following inequality
	\begin{multline}
	\mathtt{JAM}^*(\mc{U}_k,\Lambda,\mc{D}_k, H^{*}_k)\\
	-\mathtt{JAM}^*(\mc{U}_{k+1},\Lambda_{k+1},\mc{D}_{k+1}, H^{*}_{k+1}) \leq 2f_k,
	\label{eq:first}
	\end{multline} 
	%
	%
	holds. To see the above, first let us consider the case when $(s_k,c_k) \in H_k^*$. Hence,
	$(s_k,c_k) \notin H_{k+1}^*$. Let $v \geq 0$ be the flow assigned to the $(s_k,c_k)$ pair
	in the optimal solution of $\mathtt{JAM}(\mc{U}_k,\Lambda_k,\mc{D}_k, H^{*}_k)$.
	We have $\mathtt{JAM}^*(\mc{U}_k,\Lambda_k, \mc{D}_k, H^{*}_k) =v
	+\mathtt{JAM}^*(\mc{U}_{k}-ve^{(n)}_{s_k},\Lambda_{k}-ve^{(m)}_{c_k},\mc{D}_{k+1}, H^{*}_{k+1})
	\leq v+\mathtt{JAM}^*(\mc{U}_{k+1},\Lambda_{k+1},\mc{D}_{k+1}, H^{*}_{k+1})+2(f_k-v).$
	%
	The last inequality holds since no more than $2(f_k-v) \geq 0$ additional flow 
	can be matched under constraints given by the tuple 
	$(\mc{U}_{k}-ve^{(n)}_{s_k},\Lambda_{k}-ve^{(m)}_{c_k},\mc{D}_{k+1}, H^{*}_{k+1})$
	as compared to constrains given by the tuple $(\mc{U}_{k+1},\Lambda_{k+1},\mc{D}_{k+1}, H^{*}_{k+1})$.
	Inequality \eqref{eq:first} hence follows for $(s_k,c_k) \in H_k^*$.
	Now consider the case when $(s_k,c_k) \notin H_k^*$. Hence, $H_k^*=H_{k+1}^*$.
	Clearly, no more than $2f_k$ additional flow can be matched under the constraints
	given by $(\mc{U}_k, \Lambda_k, \mc{D}_k, H^{*}_k)$ as compared
	to the constraints given by $(\mc{U}_{k+1},\Lambda_{k+1},\mc{D}_{k+1}, H^{*}_{k+1})$.
	Hence, \eqref{eq:first} holds in this case also.
	Summing~\eqref{eq:first} for $k=0,1,\ldots,q-1$ we obtain $\mathtt{JAM}^*(\mc{U}_0,\Lambda_0,\mc{D}_0, H^{*}_0)-\mathtt{JAM}^*(\mc{U}_{q},\Lambda_{q},\mc{D}_{q}, H^{*}_{q})
	\leq 2\sum_{k=0}^{q-1}f_k$.
	%
	This completes the proof since 
	$\sum_{k=0}^{q-1} f_k$ is the output of 
	the \texttt{greedy} algorithm, $\mathtt{JAM}^*(\mc{U}_0,\Lambda_0,\mc{D}_0, H^{*}_0)=\mathtt{JAM}^*(\mc{U},\Lambda,\mc{D})$, 
	and $\mathtt{JAM}^*(\mc{U}_{q},\Lambda_{q},\mc{D}_{q}, H^{*}_{q})=0$. \qed 

\section{Proof of Lemma~\ref{lem:greedy}}
\label{proof:greedy}

First we note that for sufficiently large $n$, we have $n\bar{\lambda}_c > U_i$
for all $c \in C$ and all $i \in \mc{I}$. 
Hence, to allocate more than one content to the cache of a server the \texttt{greedy}
algorithm must reach a stage where the remaining flow of each
content is less than $U_{\max}:=\max_{i \in \mc{I}} U_i$. 
Since at least $U_{\min}:=\min_{i \in \mc{I}} U_i$ flow can be matched to each server, 
the maximum number of servers which can be assigned a non-zero flow after this stage is $mU_{\max}/U_{\min}$.
These are the only servers which can be allocated more than one contents.  
Therefore, the fraction of servers assigned more than two contents is at most $mU_{\max}/nU_{\min}$
which approaches zero as $n \to \infty$. This shows that the  probability of a server storing more than
one content approaches zero as $n \to \infty$, i.e., $q_{K}^{(ij)}=0$ for $\abs{K} \geq 2$.

Next, consider the case $\rho \leq 1$. 
Again, for sufficiently large $n$, we have $n\bar{\lambda}_c > U_{\max}$
for all $c \in C$.
In this case, the \texttt{greedy} algorithm cannot terminate before 
the remaining flow for each content becomes less than or equal to $U_{\max}$.  
To prove this, let us assume the converse, i.e.,
the greedy algorithm terminates when the remaining flows
for some contents are still strictly above $U_{\max}$.
This implies that the \texttt{greedy} algorithm terminated
because the remaining flows of each server has become zero.
Clearly, for this to happen we must have $n\bar{\lambda} > n \sum_{i,j} \alpha_{ij} U_i$, i.e., 
$\rho > 1$, which is a contradiction.
Therefore, the \texttt{greedy} algorithm 
terminates with 
less than $U_{\max}$ remaining flow for each content. Thus, the 
fraction of the total flow $n\bar{\lambda}$ which remains unmatched
is at most $\frac{mU}{n\bar\lambda}$, which approaches
to zero as $n \to \infty$. Hence, in the limiting system,
the whole flow $n\bar{\lambda}_c$ of each content $c \in C$ is matched.
Since $n\theta_c$ denotes the total flow of content $c$ assigned to all the servers
combined, we must have $ \theta_c= \bar{\lambda}_c$.

For $\rho > 1$, it is easy to see that the \texttt{greedy} algorithm terminates
when remaining flows of all the servers become zero.
Furthermore, at termination, all the contents,
which have been chosen at least once by the algorithm in some iteration,
have the same remaining flow. Let this flow be equal to $f$
and let the contents chosen by the algorithm at least once be $c=1,2,\ldots,c^*$
for some $c^* \leq m$. Then, we must have $n \bar{\lambda}_{c^*+1} \leq f < n \bar{\lambda}_{c^*}$.
Also, since the total matched flow $n\sum_{c'=1}^{c^*}\bar{\lambda}_{c'}-kf$ combining all the contents
is equal to the total capacity $\bar{\lambda}/\rho$ of the system, we have
$f=\frac{n}{c^*}\brac{\sum_{c'=1}^{c^*} \bar\lambda_{c'}-\frac{\bar{\lambda}}{\rho}}$.
The matched flow for each content $c=1,2,\ldots,c^*$ is therefore $n\theta_c=n\bar{\lambda}_c-f$
and for each content $c >c^*$ is $\theta_c=0$. This completes the proof of the lemma. \qed

\section{Proof of Theorem~\ref{thm:mfconv}}
\label{proof:mfconv}

We recall from Theorem~1 of~\cite{Gast_SIG} 
that the DI $\dot{x} \in H(x)$ has at least one solution $x$
with $x(0)=x_0 \in \mc{W}$ if 1) for each $w \in \mc{W}$ the set 
$H(w)$ is non-empty, closed, convex; 
2)  $\norm{H(w)}:={\sup\cbrac{\norm{z}_2: z\in H(w)}} < D(1+\norm{w})$ 
for some constant $D >0$; 3) $H$ is upper semi-continuous\footnote{The set valued mapping
	$F$ is said to be upper-hemicontinuous at $w$ if $\forall w_n, \forall z_n \in F(w_n)$,
	$\lim_{n\to \infty} w_n=w$ and $\lim_{n \to \infty}z_n=z$ implies $z \in H(w)$.}.
Furthermore, the density dependent Markov process $x^{(n)}$ with limiting drift
$h$ converges in probability to the solution of the DI $\dot{x} \in H(x)$ if 
$H(w)=\mathsf{conv}\brac{\mathsf{acc}_{w_k \to w} h(w_k)}$, where
$\mathsf{conv}(V)$ denotes the closure of convex hull containing the set $V$
and $(\mathsf{acc}_{w_k \to w} h(w_k))$ denotes the set of accumulation
points of the sequence $(h(w_k))$ for $w_k \to w$.  

From \eqref{eq:di1}, it follows directly that $H(w)$ is nonempty, closed and convex for each $w \in \mc{W}$.
Furthermore, for each $(i,j,K) \in \mc{L}$ and $r \in \mb{Z}_+$ we have 
$0 \leq \bar{\lambda}_{c}\frac{(w_{i,j,K,r-1}-w_{i,j,K,r})}{\sum_{K': c \in K'} \sum_{i,j} \alpha_{ij}q_{ijK'}(1-w_{i,j,K',U_i})} 
\leq  \bar\lambda_{c}/ \alpha_{ij}q_{ijK}$. Hence, from \eqref{eq:di1} we have
\begin{equation*}
{H_{i,j,K,r}(w)} \leq D_{i,j,K}:=\sum_{c \in K} \frac{\bar\lambda_{c}}{\alpha_{ij}q_{ijK}}+ \max_{i \in \mc{I}}U_{i}
\end{equation*}
Therefore, $\norm{H(w)} \leq D \leq D(1+\norm{w}_2)$, where $D:=\sqrt{\sum_{(i,j,K) \in \mc{L}} (1+U_i D_{i,j,K}^2)} >0$. 
We also note that $H_{i,j,K,r}(w)$ is continuous if $w_{i,j,K,U_i} < 1$ and 
the compact set $[0, \bar\lambda_{c}/ \alpha_{ij}q_{ijK}]$ contains all limit points
of 	$\bar{\lambda}_{c}\frac{(w_{i,j,K,r-1}-w_{i,j,K,r})}{\sum_{K': c \in K'} \sum_{i,j} \alpha_{ij}q_{ijK'}(1-w_{i,j,K',U_i})}$
as $w_{i,j,K,U_i} \to 1$. Hence, the set valued mapping $H$ is upper-hemicontinuous. 
 By definition it follows that  $H(w)=\mathsf{conv}\brac{\mathsf{acc}_{w_k \to w} h(w_k)}$.
Therefore, the statement of the theorem follows from 
Theorem~1 of~\cite{Gast_SIG}. \qed

\section{Proof of Theorem~\ref{thm:global_stability}}
\label{proof:global_stability}

Let $y_{c}(t):=\sum_{K: c \in K}\sum_{i,j}\alpha_{ij} q_{ijK}\sum_{r=1}^{U_i} x_{i,j,K,r}(t)$
for all $t \geq 0$.
For the \texttt{greedy} algorithm, we have from Lemma~\ref{lem:greedy} that $q_{ijK}=0$
for $\abs{K} > 1$. Hence,  we have $y(t)=\sum_{c \in C} y_c(t)$. Further,
in \eqref{eq:di1} substituting $K=\cbrac{c}$ and 
summing we obtain 
%
\begin{equation}
\dot{y}_c(t)\in H_c(y_c):= [0,\bar\lambda_c]\indic\brac{y_c = \theta_c}
+\bar\lambda_c\indic\brac{y_c < \theta_c}-y_c.
\label{eq:di2}
\end{equation}
It can be easily verified that $(z_1-z_2)(w_1-w_2) \leq 0$ for all $w_1,w_2 \in \mb{R}$ and for all $z_1 \in H_c(w_1), z_2 \in H_c(w_2)$.
In other words, $H_c$ is one-sided Lipschitz (OSL) with Lipschitz constant $L=0$.
Therefore, the DI~\eqref{eq:di2} has a unique solution. It can also be verified that for any $y(0)=y_0 \in [0,\theta_c]$,
$y_c(t)=\tilde{y}_c(t)\indic\brac{\tilde{y}_c(t) < \theta_{c}}+
\theta_{c}\indic\brac{\tilde{y}_c(t) \geq \theta_{c}}$, where $\tilde{y}_c(t)=\bar\lambda_c+(y_0-\bar\lambda_c)e^{-t}$
is a solution of the DI~\eqref{eq:di2}. Hence, it must be the only solution.
Since from Lemma~\ref{lem:greedy} we have for $\rho\leq 1$, $\theta_c=\bar\lambda_c$ for all $c \in C$,
it follows that $y_c(\infty)=\lim_{t\to \infty} y_c(t)=\bar\lambda_c$, or $y(\infty)=\bar\lambda$.
For $\rho > 1$ again from Lemma~\ref{lem:greedy} we have that $\theta_c < \bar\lambda_c$ for all
$c=1:c^*$ and $\theta_c=0$ for $c=c^*+1:m$.  Hence, $y_c(\infty)=\theta_c$ for $c=1:c^*$
and $y_c(\infty)=0$ for  $c=c^*+1:m$. Thus, $y(\infty)=\sum_{c \in C}y_c(\infty)=\sum_{c=1}^{c^*} \theta_c=\bar\lambda/\rho$.

Now, we consider the  \texttt{p2p} algorithm with $d_s=1$ for all $s \in S$.
For this case we have $q_{ijc}=\hat{\lambda}_c$ for all $c \in C$
and $q_{ijK}=0$ if $\abs{K} > 1$.  Hence, as before we have $y(t)=\sum_{c \in C} y_c(t)$
and  $y_c(t)=\tilde{y}_c(t)\indic\brac{\tilde{y}_c(t) < \theta_{c}}+
\theta_{c}\indic\brac{\tilde{y}_c(t) \geq \theta_{c}}$. 
In this case, 
$\theta_c=\bar\lambda_{c}/\rho$.
We thus have, for $\rho \leq 1$, $\bar\lambda_c \leq \theta_c$ and for $\rho \leq 1$
$\theta_c < \lambda_{c}$.
Hence, $y_c(\infty)=\lim_{t \to \infty}y_c(t)=\lambda_c$ for $\rho \leq 1$
and $y_c(\infty)=\theta_c=\lambda_{c}/\rho$ for $\rho > 1$. Thus,
$y(\infty)=\min\brac{\bar{\lambda},  \bar\lambda/\rho}$.

Since $y^{(n)}(\infty) \leq { \bar\lambda/\rho}$ uniformly for all $n$,
it follows that the sequence $(y^{(n)}(\infty))_n$ is tight. From Theorem~\ref{thm:mfconv} 
and the uniqueness of solution of $y(t)$ we have that $y^{(n)} \to y$.
Hence, every limit point of the sequence of stationary measures of 
$y^{(n)}$ must be an invariant measure of the process $y$.  Since $y(\infty)$
is the unique, globally asymptotically stable stationary point of the process 
$y$ it follows that the only invariant
measure for the process $y$ is the Dirac measure concentrated at $y(\infty)$.
Thus, all limit points of the sequence of stationary measures of 
$y^{(n)}$ must coincide with the Dirac measure at $y(\infty)$, i.e., $\lim_{n\to \infty}y^{(n)}(\infty)=y(\infty)$. \qed

\end{document}